\documentclass[runningheads,bookmarks,bookmarksdepth=2]{llncs}

\usepackage[T1]{fontenc}

\usepackage{graphicx}
\usepackage{wrapfig2}
\usepackage{hyperref}
\usepackage{amsmath,amssymb}
\usepackage{stmaryrd}
\usepackage[capitalise]{cleveref}
\usepackage{xcolor}

\urlstyle{rm}

\usepackage{potl}
\usepackage{mathtools}
\usepackage{thmtools}
\usepackage{booktabs}
\usepackage{tabularx}
\usepackage{multirow}
\usepackage{adjustbox}
\usepackage{pifont}
\usepackage[all]{foreign}
\usepackage{subcaption}
\usepackage[inline]{enumitem}

\usepackage{forest}
\usepackage{tikz}
\usetikzlibrary{arrows,automata}
\usetikzlibrary{matrix}

\usepackage{ifthen}
\newboolean{arxiv}
\setboolean{arxiv}{true}

\DeclareMathOperator\smb{smb}
\let\stack\relax
\DeclareMathOperator\stack{stack}
\DeclareMathOperator\ctx{ctx}
\DeclareMathOperator\struct{struct}

\DeclareMathOperator\depth{depth}
\DeclarePairedDelimiter{\set}{\{}{\}}
\newcommand\Gargs{\mathcal{G}}

\newcommand{\cmark}{\ding{51}}

\newcommand\unr[1]{\llbracket #1 \rrbracket}

\DeclareMathOperator{\pc}{pc}
\DeclareMathOperator{\val}{val}
\DeclareMathOperator{\varap}{varap}
\DeclareMathOperator{\prop}{propvals}

\let\emptyset\varnothing

\newcounter{termrule}
\setcounter{termrule}{0}

\crefname{termrule}{Rule}{Rules}

\allowdisplaybreaks

\begin{document}

\title{SMT-based Symbolic Model-Checking for Operator Precedence Languages}

\author{%
  Michele Chiari\inst{1} \and
  Luca Geatti\inst{2} \and
  Nicola Gigante\inst{3} \and
  Matteo Pradella\inst{4}}
\authorrunning{M. Chiari et al.}

\institute{%
  TU Wien, Treitlstra\ss{}e 3, 1040 Vienna, Austria\\
  \email{michele.chiari@tuwien.ac.at} \and
  University of Udine, Italy\\
  \email{luca.geatti@uniud.it} \and
  Free University of Bozen-Bolzano, Italy\\
  \email{nicola.gigante@unibz.it} \and
  Politecnico di Milano, Italy\\
  \email{matteo.pradella@polimi.it}
}

\maketitle

\begin{abstract}
  Operator Precedence Languages (OPL) have been recently identified as
  a suitable formalism for model checking recursive procedural programs,
  thanks to their ability of modeling the program stack. OPL requirements
  can be expressed in the \emph{Precedence Oriented Temporal Logic} (\ac{POTL}),
  which features modalities to reason on the natural matching
  between function calls and returns, exceptions, and other advanced
  programming constructs that previous approaches, such as Visibly Pushdown
  Languages, cannot model effectively. Existing approaches for model
  checking of \ac{POTL} have been designed following the explicit-state,
  automata-based approach, a feature that severely limits their
  scalability.
  In this paper,
  we give the first symbolic, SMT-based approach for model
  checking \ac{POTL} properties. While previous approaches construct the
  automaton for both the \ac{POTL} formula and the model of the program, we
  encode them into a (sequence of) SMT formulas. The search of a trace of
  the model witnessing a violation of the formula is then carried out by an
  SMT-solver, in a Bounded Model Checking fashion.
  We carried out an experimental
  evaluation, which shows the effectiveness of the proposed solution.

\keywords{SMT-based Model Checking \and Tree-shaped Tableau \and Temporal Logic \and Operator Precedence Languages.}
\end{abstract}


\section{Introduction}
\label{sec:introduction}

Operator Precedence Languages (OPL) \cite{Floyd1963} are very promising for
software verification: as a subclass of context-free languages, they can
naturally encode the typical stack-based behavior of programs, without the
shortcomings of the better known Visibly Pushdown Languages (VPL),
originally introduced as Input-driven languages
\cite{AluMad04,jacm/AlurM09,DBLP:conf/icalp/Mehlhorn80}. In particular, the
main characteristic of VPL is the one-to-one ``matching'' between a symbol
representing a procedure call and the symbol representing its corresponding
return.  Unfortunately, this feature makes them ill-suited to model several
typical behaviors of programs that induce a many-to-one or one-to-many
matching, such as exceptions, interrupts, dynamic memory management,
transactions, and continuations.

OPL were introduced through grammars for deterministic parsing by Floyd in
1963, and were re-discovered and studied in more recent works, where
containment of VPL and closure w.r.t.\ Boolean operations were proved
\cite{CrespiMandrioli12}, together with the following characterizations:
automata-based, monadic second order logic \cite{LonatiEtAl2015},
regular-like expressions \cite{MPC20}, and syntactic congruence with
finitely many equivalence classes \cite{HKMS23}.  
OPL are also the biggest known class maintaining an important feature of Regular languages: 
first-order logic, star-free expressions, and aperiodicity define the same subclass~\cite{MPC23}.  
A temporal logic
called \ac{OPTL} was defined in \cite{ChiariMP20a}, and a subsequent
extension called \ac{POTL} (on which we focus in this work) was introduced
in \cite{ChiariMP21a}, and then proved to capture the first-order definable
fragment of OPL in \cite{ChiariMP21b}. The linear temporal logics for VPL
CaRet \cite{AlurEM04} and \ac{NWTL} \cite{lmcs/AlurABEIL08} were also
proved to be less expressive than both \ac{OPTL}~\cite{ChiariMP20a} and
\ac{POTL}~\cite{ChiariMP21b}.

\ac{POTL} contains explicit context-free modalities that interact
not only with the linear order of events representing time,
but also with the nested structure of function calls, returns, and exceptions.
For instance, consider this formula:
\[
\llglob (\lcall \land \mathrm{qs} \implies \neg (\lunext \lexc \lor \lcunext \lexc))
\]
Here $\llglob$ is the LTL globally operator,
and $\lcall$ and $\lexc$ hold respectively in positions that represent
a function call and an exception.
$\lunext \lexc$ means that the \emph{next} position is an exception
(similarly to the LTL next),
while $\lcunext \lexc$ means that a subsequent position,
which \emph{terminates} the function call in the current position, is an exception.
Thus, the formula means
``function $\mathrm{qs}$ is never terminated by an exception''
(or, equivalently, it never terminates or it always terminates with a normal return).

It is worth to note that VPL were originally proposed for automatic
verification, thanks to their nice Regular-like closure properties, but
effective Model Checking (MC) tools for them are still not publicly
available, in particular supporting logics capable of expressing
context-free specifications. This situation improved with the introduction
of POMC \cite{pomc,ChiariMP21a,ChiariMPP23}, a model checker for structured
context free languages based on \ac{POTL}, but that can be easily adapted
to the simpler structure of VPL. POMC's core consists of an explicit-state
tableau construction procedure, which yields nondeterministic automata of
size at most singly exponential in the formula's length, and is shown to be
quite effective in realistic cases in \cite{PontiggiaCP21,ChiariMPP23}.

The main shortcoming of explicit-state MC tools is the state explosion
problem, i.e. the exponential growth of the state space as the system size
and complexity increase, which makes MC infeasible for large and realistic
systems. Indeed, as reported in \cite{ChiariMPP23}, managing longer
arrays or variables encoded with a realistic number of bits was
problematic.  A classical way to address this issue is to use Symbolic
Model Checking, which is a variant of MC that represents the system and the
specification using symbolic data structures, instead of explicit
enumeration of states and transitions. One very successful symbolic
technique is Bounded Model Checking (BMC)
\cite{DBLP:journals/ac/BiereCCSZ03,ClarkeBRZ01}, where the
model is unrolled for a fixed number of steps and encoded into SAT, i.e.
Boolean Satisfiability, to leverage recent efficient SAT solvers, and later
the more general Satisfiability Modulo Theories (SMT) solvers, such as Z3
\cite{DBLP:conf/tacas/MouraB08}.

In this paper we apply BMC to \ac{POTL} by encoding its tableau into SMT,
extending the approach used in the BLACK tool~\cite{GeattiGMV24}. BLACK is
a satisfiability checker and temporal reasoning framework based on an
encoding into SAT of Reynolds' one-pass tableau system for classical linear
temporal logic~\cite{GeattiGMR21}. Currently, we consider the future
fragment of the temporal logic \ac{POTL} on finite-word semantics, but we
plan to extend the encoding to cover full \ac{POTL} and $\omega$-words.
SMT-based approaches were already introduced for verifying pushdown program
models \cite{HuangW10,KomuravelliGC16}, but only against regular
specifications. To the best of our knowledge, this is the first SMT
encoding of a context-free temporal logic, proving that BMC can be
beneficial to verification of this class of temporal logics, too.

We applied our tool to a number of realistic cases: an implementation of
the Quicksort algorithm, a banking application, and C++ implementations of
a generic stack data structure, where our approach is compared with the
original POMC. The results are very promising, as our SMT-based approach
was able to avoid POMC's exponential increase of the solving time in
several cases.

The paper is structured as follows. OPL and the logic \ac{POTL} are
introduced in Section 2.  Section 3 defines the tree-shaped tableau for
\ac{POTL}, while Section 4 presents its encoding into SMT.  Section
5 illustrates the experimental evaluation. Last, Section 6 draws the
conclusions.



\section{Preliminaries}
\label{sec:preliminaries}

\subsection{Operator Precedence Languages}

We assume that the reader has some familiarity with formal language theory concepts such as
context-free grammar, parsing, shift-reduce algorithm \cite{GruneJacobs:08,Harrison78}.
Operator Precedence Languages (OPL) were historically defined through their generating grammars
\cite{Floyd1963}; in this paper, we
characterize them through their automata
\cite{LonatiEtAl2015}, as they are more suitable for model checking.
Readers not familiar with OPL may refer to \cite{MP18} for more 
explanations on their basic concepts.

Let $\Sigma$ be a finite alphabet, and $\varepsilon$ the empty string.
We use a special symbol $\# \not\in \Sigma$ to mark the beginning and
the end of any string.
  An \textit{operator precedence matrix} (OPM) $M$ over $\Sigma$ is a partial function
  $(\Sigma \cup \{\#\})^2 \to \{\lessdot, \allowbreak \doteq, \allowbreak \gtrdot\}$,
  that, for each ordered pair $(a,b)$, defines the \emph{precedence relation} (PR) $M(a,b)$
  holding between $a$ and $b$. If the function is total we say that M is \emph{complete}.
  We call the pair $(\Sigma, M)$ an \emph{operator precedence alphabet}.
  Relations $\lessdot, \doteq, \gtrdot$, are respectively named
  \emph{yields precedence, equal in precedence}, and \emph{takes precedence}.
  By convention, the initial \# yields precedence, and other
  symbols take precedence on the ending \#.
  If $M(a,b) = \prf$, where $\prf \in \{\lessdot, \doteq, \gtrdot \}$,
  we write $a \pr b$.  For $u,v \in \Sigma^+$ we write $u \pr v$ if
  $u = xa$ and $v = by$ with $a \pr b$.
  The role of PR is to give structure to words:
  they can be seen as special and more concise parentheses, where
  e.g. one ``closing'' $\gtrdot$ can match more than one ``opening'' $\lessdot$.
  It is important to remark that PR are not ordering relations, 
  despite their graphical appearance.

\begin{definition}\label{def:OPA}
An  \emph{operator precedence automaton (OPA)} is a tuple
$\mathcal A = (\Sigma, \allowbreak M, \allowbreak Q, \allowbreak I, \allowbreak F, \allowbreak \delta) $ where
$(\Sigma, M)$ is an operator precedence alphabet,
$Q$ is a finite set of states,
$I \subseteq Q$ is the set of initial states,
$F \subseteq Q$ is the set of final states,
$\delta$ is a triple of transition relations
$\delta_{\mathit{shift}}\subseteq Q \times \Sigma \times Q$,
$\delta_{\mathit{push}}\subseteq Q \times \Sigma \times Q$,
and
$\delta_{\mathit{pop}}\subseteq Q \times Q \times Q$.
An OPA is deterministic iff $I$ is a singleton,
and all three components of $\delta$ are functions.
\end{definition}

To define the semantics of OPA, we set some notation.
Letters $p, q, p_i, \allowbreak q_i, \dots$ denote states in $Q$.
We use
$q_0 \va{a}{q_1}$ for $(q_0, a, q_1) \in \delta_{\mathit{push}}$,
$q_0 \vshift{a}{q_1}$ for $(q_0, a, q_1) \in \delta_{\mathit{shift}}$,
$q_0 \flush{q_2}{q_1}$  for $(q_0, q_2, q_1) \in  \delta_{\mathit{pop}}$,
and ${q_0} \ourpath{w} {q_1}$, if the automaton can read $w \in \Sigma^*$ going from $q_0$ to $q_1$.
Let $\Gamma = \Sigma \times Q$ and $\Gamma' = \Gamma \cup  \{\bot\}$
be the \textit{stack alphabet};
we denote symbols in $\Gamma'$ as $\tstack aq$ or $\bot$.
We set $\symb {\tstack aq} = a$, $\symb {\bot}=\#$, and
$\state {\tstack aq} = q$.
For a stack content $\gamma = \gamma_n \dots \gamma_1 \bot$,
with $\gamma_i \in \Gamma$, $n \geq 0$,
we set $\symb \gamma = \symb{\gamma_n}$ if $n \geq 1$, $\symb \gamma = \#$ if $n = 0$.

A \emph{configuration} of an OPA is a triple $c = \tconfig w q \gamma$,
where $w \in \Sigma^*\#$, $q \in Q$, and $\gamma \in \Gamma^*\bot$.
A \emph{computation} or \emph{run} is a finite sequence
$c_0 \transition{} c_1 \transition{} \dots \transition{} c_n$
of \emph{moves} or \emph{transitions}
$c_i \transition{} c_{i+1}$.
There are three kinds of moves, depending on the PR between the symbol
on top of the stack and the next input symbol:

\noindent {\bf push move:} if $\symb \gamma \lessdot \ a$ then
$\tconfig {ax} p  \gamma \transition{} \tconfig {x} q {\tstack   a p \gamma }$,
with $(p,a, q) \in \delta_{\mathit{push}}$;

\noindent {\bf shift move:} if $a \doteq b$ then
$\tconfig {bx} q { \tstack a p \gamma}  \transition{} \tconfig x  r { \tstack b p \gamma}$,
with $(q,b,r) \in \delta_{\mathit{shift}}$;

\noindent {\bf pop move:} if $a \gtrdot b$
then
$\tconfig {bx} q  { \tstack a p \gamma}\transition{} \tconfig {bx} r \gamma $,
with $(q, p, r) \in \delta_{\mathit{pop}}$.

Shift and pop moves are not performed when the stack contains only $\bot$.
Push moves put a new element on top of the stack consisting of the input symbol together with the current state of the OPA.
Shift moves update the top element of the stack by \textit{changing its input symbol only}.
Pop moves remove the element on top of the stack,
and update the state of the OPA according to $\delta_{\mathit{pop}}$ on the basis of the current state of the OPA and the state of the removed stack symbol.
They do not consume the input symbol, which is used only to establish the $\gtrdot$ relation, remaining available for the next move.
The OPA accepts the language
$
L(\mathcal A) = \left\{ x \in \Sigma^* \mid  \tconfig {x\#} {q_I} {\bot} \vdash ^*
\tconfig {\#} {q_F}{\bot} , \allowbreak q_I \in I, \allowbreak q_F \in F \right\}.
$

We now introduce the concept of {\em chain}, which makes the connection between OP relations and
context-free structure explicit, through brackets.
\begin{definition}\label{def:chain}
A \emph{simple chain}
$
\ochain {c_0} {c_1 c_2 \dots c_\ell} {c_{\ell+1}}
$
is a string $c_0 c_1 c_2 \dots c_\ell c_{\ell+1}$,
such that:
$c_0, \allowbreak c_{\ell+1} \in \Sigma \cup \{\#\}$,
$c_i \in \Sigma$ for every $i = 1,2, \dots \ell$ ($\ell \geq 1$),
and $c_0 \lessdot c_1 \doteq c_2 \dots c_{\ell-1} \doteq c_\ell
\gtrdot c_{\ell+1}$.
A \emph{composed chain} is a string 
$c_0 s_0 c_1 s_1 c_2  \dots c_\ell s_\ell c_{\ell+1}$, 
where
$\ochain {c_0}{c_1 c_2 \dots c_\ell}{c_{\ell+1}}$ is a simple chain, and
$s_i \in \Sigma^*$ is the empty string 
or is such that $\ochain {c_i} {s_i} {c_{i+1}}$ is a chain (simple or composed),
for every $i = 0,1, \dots, \ell$ ($\ell \geq 1$). 
Such a composed chain will be written as
$\ochain {c_0} {s_0 c_1 s_1 c_2 \dots c_\ell s_\ell} {c_{\ell+1}}$.
$c_0$ (resp.\ $c_{\ell+1}$) is called its \emph{left} (resp.\ \emph{right}) \emph{context};
all symbols between them form its \emph{body}.
\end{definition}

\begin{figure}[tb]
\centering
\begin{tabular}{p{0.3\textwidth} p{0.7\textwidth}}
\begin{minipage}{0.3\textwidth}
\[
\begin{array}{r | c c c c}
         & \lcall   & \lret   & \lhandle & \lthrow \\
\hline
\lcall   & \lessdot & \doteq  & \lessdot & \gtrdot \\
\lret    & \gtrdot  & \gtrdot & \gtrdot & \gtrdot \\
\lhandle & \lessdot & \gtrdot & \lessdot & \doteq \\
\lthrow  & \gtrdot  & \gtrdot & \gtrdot  & \gtrdot \\
\end{array}
\]
\end{minipage}
&
\begin{minipage}{0.7\textwidth}
\begin{footnotesize}
\[
\begin{array}{l | l}
1 & \# \lessdot \lcall \lessdot \lhandle \lessdot \underline{\lcall} \gtrdot \lthrow \gtrdot \lcall \doteq \lret \gtrdot \lret \gtrdot \# \\
2 & \# \lessdot \lcall \lessdot \underline{\lhandle} \doteq \underline{\lthrow} \gtrdot \lcall \doteq \lret \gtrdot \lret \gtrdot \# \\
3 & \# \lessdot \lcall \lessdot \underline{\lcall} \doteq \underline{\lret} \gtrdot \lret \gtrdot \# \\
4 & \# \lessdot \underline{\lcall} \doteq \underline{\lret} \gtrdot \# \\
5 & \# \doteq \# \\
\end{array}
\]
\end{footnotesize}
\end{minipage}
\end{tabular}
\[
\# [ \lcall [ [ \lhandle
[ \lcall ]
\lthrow ] \lcall \; \lret ] \lret ] \#
\]
\caption{OPM $M_\lcall$ (left), a string with chains shown by brackets (bottom), and its parsing steps using the OP algorithm (right).}
\label{fig:opm-mcall}
\end{figure}

A finite word $w$ over $\Sigma$ is \emph{compatible} with an OPM $M$ iff
for each pair of letters $c, d$, consecutive in $w$, $M(c,d)$ is defined and,
for each substring $x$ of $\# w \#$ that is a chain of the form $^a[y]^b$,
$M(a, b)$ is defined.

Chains can be identified through the traditional operator precedence parsing algorithm.
We apply it to the sample word
\(w_\mathit{ex} =
 \lcall \ \allowbreak
 \lhandle \ \allowbreak
 \lcall \ \allowbreak
 \lthrow \ \allowbreak
 \lcall \ \allowbreak
 \lret \ \allowbreak
 \lret
\),
which is compatible with $M_\lcall$.
First, write all precedence relations between consecutive characters,
according to $M_{\lcall}$.
Then, recognize all innermost patterns of the form
$a \lessdot c \doteq \dots \doteq c \gtrdot b$ as simple chains, and remove their bodies.
Then, write the precedence relations between the left and right contexts of the removed body,
$a$ and $b$, and iterate this process until only \#\# remains.
This procedure is applied to $w_\mathit{ex}$ and illustrated in Fig.~\ref{fig:opm-mcall} (right).
The chain body removed in each step is underlined.
In step 1 we recognize the simple chain $\ochain{\lhandle}{\underline{\lcall}}{\lthrow}$, which can be removed.
In the next steps we recognize as chains first 
$\ochain {\lcall} {\lhandle \, \lthrow} {\lcall}$, then
$\ochain {\lcall} {\lcall \, \lret} {\lret}$, and last
$\ochain {\#} {\lcall \, \lret} {\#}$.
Fig.~\ref{fig:opm-mcall} (bottom) reports the chain structure of $w_\mathit{ex}$.

Let $\mathcal A$ be an OPA.
We call a \emph{support} for the simple chain
$\ochain {c_0} {c_1 c_2 \dots c_\ell} {c_{\ell+1}}$
any path in $\mathcal A$ of the form
$q_0
\va{c_1}{q_1}
\vshift{}{}
\dots
\vshift{}q_{\ell-1}
\vshift{c_{\ell}}{q_\ell}
\flush{q_0} {q_{\ell+1}}$.
The label of the last (and only) pop is exactly $q_0$, i.e.\ the first state of the path;
this pop is executed because of relation $c_\ell \gtrdot c_{\ell+1}$.
We call a \emph{support for the composed chain}
$\ochain {c_0} {s_0 c_1 s_1 c_2 \dots c_\ell s_\ell} {c_{\ell+1}}$
any path in $\mathcal A$ of the form
\(
q_0
\ourpath{s_0}{q'_0}
\va{c_1}{q_1}
\ourpath{s_1}{q'_1}
\vshift{c_2}{}
\dots
\vshift{c_\ell} {q_\ell}
\ourpath{s_\ell}{q'_\ell}
\flush{q'_0}{q_{\ell+1}}
\)
where, for every $i = 0, 1, \dots, \ell$:
if $s_i \neq \epsilon$, then $q_i \ourpath{s_i}{q'_i} $
is a support for the chain $\ochain {c_i} {s_i} {c_{i+1}}$, else $q'_i = q_i$.

Chains fully determine the parsing structure of any
OPA over $(\Sigma, M)$. If the OPA performs the computation
$
\langle sb, q_i, [a, q_j] \gamma \rangle \vdash^*
\langle b,  q_k, \gamma \rangle
$,
then $\ochain asb$
is necessarily a chain over $(\Sigma, \allowbreak M)$, and there exists a support
like the one above with $s = s_0 c_1 \dots c_\ell s_\ell$ and $q_{\ell+1} = q_k$.
This corresponds to the parsing of the string $s_0 c_1 \dots c_\ell s_{\ell}$ within the
contexts $a$,$b$, which contains all information needed
to build the subtree whose frontier is that string.


In \cite{CrespiMandrioli12} it is proved that Visibly Pushdown Languages (VPL) \cite{AluMad04} are strictly included in OPL.
In VPL the input alphabet is partitioned into three disjoint sets,
namely of \emph{call} ($\Sigma_c$), {\em return} ($\Sigma_r$), and {\em internal} ($\Sigma_i$) symbols,
where {\em calls} and {\em returns} respectively play the role of open and closed parentheses.
Intuitively, the string structure determined by these alphabets can be represented through an OPM as follows:
$a \lessdot b$, for any $a \in \Sigma_c$,
$b \in \Sigma_c \cup \Sigma_i$;
$a \doteq b$, for any $a \in \Sigma_c$, $b \in \Sigma_r$;
$a \gtrdot b$, for all the other cases.
On the other hand, the OPM that we use in this paper cannot be expressed in VPL, because the typical
behavior of exceptions cannot be modeled with the limited one-to-one structure of calls and returns.

To sum up, given an OP alphabet, the OPM $M$ assigns a unique structure
to any compatible string in $\Sigma^*$;
unlike VPL, such a structure is
not visible in the string, and must be built by means of a non-trivial parsing
algorithm.  An OPA defined on the
OP alphabet selects an appropriate subset within the
``universe'' of strings compatible with $M$.

\subsection{Precedence Oriented Temporal Logic}

\ac{POTL} is a propositional linear-time temporal logic featuring
context-free modalities based on OPL.  Here we are only interested in its
future fragment, \ac{POTL_f} (the letter ``f'' stands for
``future''), with the addition of \emph{weak} operators, which are needed
for our tableau. In this paper, we focus on the finite words semantics for
\ac{POTL_f}.

We fix a finite set of atomic propositions $AP$.
\ac{POTL_f} semantics are based on OP words,
which are tuples $(U, <, M_{AP}, P)$, where $U = \{0, \dots, n\}$, $n \in \mathbb{N}$,
is a finite set of word positions, $<$ a linear order on them,
$M_{AP}$ an OPM on $\powset{AP}$,
and $P : U \rightarrow \powset{U}$ a labeling function,
with $0, n \in P(\#)$.
From $M_{AP}$ follows the \emph{chain relation} $\chain \subseteq U^2$,
such that $\chain(i,j)$ holds iff $i$ and $j$ are resp.\ the left and right
contexts of a chain.
We only define the OPM on propositions in \textbf{bold}, called \emph{structural},
and assume that only one of them holds in each position.
If $\mathbf{l}_1 \sim \mathbf{l}_2$ for any PR $\sim$ and $i \in P(\mathbf{l}_1)$
and $j \in P(\mathbf{l}_2)$, we write $i \sim j$.

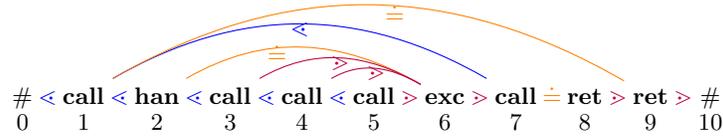
\begin{figure}
\centering
\begin{tikzpicture}
\matrix (m) [matrix of math nodes, column sep=-4, row sep=-4]
{
  \#
  & \color{blue} \lessdot & \lcall
  & \color{blue} \lessdot & \lhandle
  & \color{blue} \lessdot & \lcall
  & \color{blue} \lessdot & \lcall
  & \color{blue} \lessdot & \lcall
  & \color{purple} \gtrdot & \lexc
  & \color{purple} \gtrdot & \lcall
  & \color{orange} \doteq & \lret
  & \color{purple} \gtrdot & \lret
  & \color{purple} \gtrdot & \# \\
  0
  & & 1
  & & 2
  & & 3
  & & 4
  & & 5
  & & 6
  & & 7
  & & 8
  & & 9
  & & 10 \\
};
\draw[orange] (m-1-5) to [out=30, in=150] node[shift={(-10pt,-2pt)}]{$\doteq$} (m-1-13);
\draw[purple] (m-1-7) to [out=30, in=150] node[yshift=-2pt]{$\gtrdot$} (m-1-13);
\draw[purple] (m-1-9) to [out=30, in=150] node[yshift=-2pt]{$\gtrdot$} (m-1-13);
\draw[blue] (m-1-3) to [out=30, in=150] node[yshift=-2pt]{$\lessdot$} (m-1-15);
\draw[orange] (m-1-3) to [out=30, in=150] node[shift={(10pt,-3pt)}]{$\doteq$} (m-1-19);
\end{tikzpicture}
\caption{An example OP word, with the $\chain$ relation depicted by arrows, and PRs.
  First, a procedure is called (pos.~1),
  which installs an exception handler in pos.~2.
  Then, another function throws an exception,
  which is caught by the handler.
  Another function is called and returns and, finally, the initial one also returns.}
\label{fig:potl-example-word}
\end{figure}

\ac{POTL_f} offers next and until operators based on two different kinds of paths,
which we define below, after fixing an OP word $w$.
\begin{definition}
\label{def:summary}
The \emph{downward summary path (DSP)} between positions $i$ and $j$, denoted $\pi_\chain^d(w, i, j)$,
is a set of positions $i = i_1 < i_2 < \dots < i_n = j$ such that, for each $1 \leq p < n$,
\[
i_{p+1} =
\begin{cases}
  k & \text{if $k = \max\{ h \mid h \leq j \land \chain(i_p,h) \land (i_p \lessdot h \lor i_p \doteq h)\}$ exists;} \\
  i_p + 1 & \text{otherwise, if $i_p \lessdot (i_p + 1)$ or $i_p \doteq (i_p + 1)$.}
\end{cases}
\]
We write $\pi_\chain^d(w, i, j) = \emptyset$ if no such path exists.
The definition for $\pi_\chain^u(w, i, j)$ is obtained by substituting $\gtrdot$ for $\lessdot$.
\end{definition}
DSPs can either go downward in the nesting structure of the $\chain$ relation
by following the linear order, or skip whole chain bodies by following the $\chain$ relation.
What this means depends on the OPM:
with $M_\lcall$, until operators on DSPs express properties local to a function invocation,
including children calls.
Their upward counterparts, instead, go from inner functions towards parent invocations.
For instance, in Fig.~\ref{fig:potl-example-word} we have
$\pi_\chain^d(w, 1, 6) = \set{1,5,6}$, and
$\pi_\chain^u(w, 2, 7) = \set{2,4,5,6,7}$.

\begin{definition}
The \emph{downward hierarchical path} between positions $i$ and $j$,
denoted $\pi_H^d(w, i, j)$, is a sequence of positions
$i = i_1 < i_2 < \dots < i_n = j$ such that there exists $h > j$ such that
for each $1 \leq p \leq n$ we have $\chain(i_p,h)$ and $i_p \gtrdot h$,
and for each $1 \leq q < n$ there is no position $k$
such that $i_q < k < i_{q+1}$ and $\chain(k,h)$.

The upward hierarchical path $\pi_H^u(w, i, j)$ is defined similarly,
except $h < j$ and for all $1 \leq p \leq n$ we have $\chain(h, i_p)$ and $h \lessdot i_p$.

We write $\pi_H^d(w, i, j) = \emptyset$ or $\pi_H^u(w, i, j) = \emptyset$
if no such path exists.
\end{definition}
Hierarchical paths range between multiple positions in the $\chain$ relation with the same one.
With $M_\lcall$, this means functions terminated by the same exception.
For instance, in Fig.~\ref{fig:potl-example-word} we have
$\pi_H^d(w, 3, 4) = \set{3,4}$.

Let $\mathrm{a} \in AP$, and $t \in \{d, u\}$;
the syntax of \ac{POTL_f} is the following:
\begin{align*}
    \varphi \coloneqq \mathrm{a} &
    \mid \neg \varphi
    \mid \varphi \lor \varphi
    \mid \lnextsup{t} \varphi
    \mid \lwnextsup{t} \varphi
    \mid \lcnext{t} \varphi
    \mid \lwcnext{t} \varphi
    \mid \lguntil{t}{\chi}{\varphi}{\varphi}
    \mid \lcrelease{t}{\varphi}{\varphi} \\
    & \mid \lhnext{t} \varphi
    \mid \lwhnext{t} \varphi
    \mid \lguntil{t}{H}{\varphi}{\varphi}
    \mid \lhrelease{t}{\varphi}{\varphi}
\end{align*}
The truth of \ac{POTL_f} formulas is defined w.r.t.\ a single word position.
Let $w$ be a finite \ac{OP} word, and $\mathrm{a} \in AP$;
we set $\sim^d = \mathord{\lessdot}$ and $\sim^u = \mathord{\gtrdot}$.
Then, for any position $i \in U$ of $w$ and $t \in \set{d, u}$:
\begin{enumerate}
  \item $(w, i) \models \mathrm{a}$ iff $i \in P(\mathrm{a})$;
  \item $(w, i) \models \neg\varphi$ iff $(w,i)\not\models\varphi$;
  \item $(w, i) \models \varphi_1\lor\varphi_2$ iff $(w,i)\models\varphi_1$ or
    $(w,i)\models\varphi_2$;
  \item $(w,i) \models \lnextsup{t} \varphi$ iff $i<|w|-1$, 
    $(w,i+1) \models \varphi$ and $i \sim^t (i+1)$ or $i \doteq (i+1)$;
  \item $(w,i) \models \lwnextsup{t} \varphi$ iff
    $i=|w|-1$ and ($i \sim^t (i+1)$ or $i \doteq (i+1)$)
    implies $(w,i+1) \models \varphi$;
  \item $(w,i) \models \lcnext{t} \varphi$
    iff $\exists j > i$ such that $\chain(i,j)$,
    $i \sim^t j$ or $i \doteq j$, and $(w,j) \models \varphi$;
  \item $(w,i) \models \lwcnext{t} \varphi$
    iff $\forall j > i$ such that $\chain(i,j)$ and
    ($i \sim^t j$ or $i \doteq j$), we have $(w,j) \models \varphi$;
  \item $(w,i) \models \lguntil{t}{\chi}{\varphi_1}{\varphi_2}$ iff
    $\exists j \geq i$ such that $\pi_\chain^t(w, i, j) \neq \emptyset$,
    $(w, j) \models \varphi_2$ and $\forall j' < j$ in $\pi_\chain^t(w, i, j)$
    we have $(w, j) \models \varphi_1$;
  \item $(w,i) \models \lcrelease{t}{\varphi_1}{\varphi_2}$ iff
    $\forall j \geq i$ such that $\pi_\chain^t(w, i, j) \neq \emptyset$
    we have either $(w, j') \models \varphi_2$ for all $j' \in \pi_\chain^t(w, i, j)$,
    or $\exists k \in \pi_\chain^t(w, i, j)$ such that $(w, k) \models \varphi_1$
    and $\forall j' \leq k$ in $\pi_\chain^t(w, i, j)$ we have $(w, j) \models \varphi_2$;
  \item $(w,i) \models \lhunext \varphi$ iff
    there exist a position $h < i$ s.t.\ $\chain(h,i)$ and $h \lessdot i$
    and a position $j = \min\{ k \mid i < k \land \chain(h,k) \land h \lessdot k \}$
    and $(w,j) \models \varphi$;
  \item $(w,i) \models \lwhunext \varphi$ iff
    the existence of a position $h < i$ s.t.\ $\chain(h,i)$ and $h \lessdot i$
    and a position $j = \min\{ k \mid i < k \land \chain(h,k) \land h \lessdot k \}$
    implies $(w,j) \models \varphi$;
  \item $(w,i) \models \lhdnext \varphi$ iff
    there exist a position $h > i$ s.t.\ $\chain(i,h)$ and $i \gtrdot h$
    and a position $j = \min\{ k \mid i < k \land \chain(k,h) \land k \gtrdot h \}$
    and $(w,j) \models \varphi$;
  \item $(w,i) \models \lwhdnext \varphi$ iff
    the existence of a position $h > i$ s.t.\ $\chain(i,h)$ and $i \gtrdot h$
    and a position $j = \min\{ k \mid i < k \land \chain(k,h) \land k \gtrdot h \}$
    implies $(w,j) \models \varphi$;
  \item $(w,i) \models \lhuntil{t}{\varphi_1}{\varphi_2}$ iff
    $\exists j \geq i$ such that $\pi_H^t(w, i, j) \neq \emptyset$,
    $(w, j) \models \varphi_2$ and $\forall j' < j$ in $\pi_H^t(w, i, j)$
    we have $(w, j) \models \varphi_1$;
  \item $(w,i) \models \lhrelease{t}{\varphi_1}{\varphi_2}$ iff
    $\forall j \geq i$ such that $\pi_H^t(w, i, j) \neq \emptyset$
    we have either $(w, j') \models \varphi_2$ for all $j' \in \pi_\chain^t(w, i, j)$,
    or $\exists k \in \pi_H^t(w, i, j)$ such that $(w, k) \models \varphi_1$
    and $\forall j' \leq k$ in $\pi_\chain^t(w, i, j)$ we have $(w, j) \models \varphi_2$.
\end{enumerate}
We additionally employ $\land$ and $\implies$ with the usual semantics.

For instance, formula $\lcduntil{\top}{p}$ evaluated in a function $\lcall$
means that $p$ holds somewhere between the call and its matched return (or exception);
formula $\lcunext p$, evaluated in a $\lcall$,
means that $p$ will hold when it returns (this can be used to check post-conditions or,
if $p = \lexc$, to assert that the function is terminated by an exception).
Formula $\lhduntil{\top}{p}$, when evaluated in a $\lcall$ terminated by an exception,
means that $p$ holds in one of the $\lcall$s already terminated by the same exception.
For a more in-depth presentation of \ac{POTL}, we refer the reader to \cite{ChiariMP21b}.



\section{A tree-shaped tableau for \texorpdfstring{\ac{POTL_f}}{POTL}}
\label{sec:tableau}

In this section, we describe our tableau system for \ac{POTL_f}, that will form the
core of our bounded model checking procedure. Let $\Sigma$ be a set of structural propositions,
$(\Sigma,M)$ an OP alphabet, AP a set of atomic propositions, and $\varphi$ a formula over
$\Sigma\cup AP$. Given $\Gamma \subseteq \clos{\varphi}$, if $\Gamma \cap \Sigma
= \{a\}$, then we define $\struct(\Gamma) = a$. Moreover, for $\Gamma, \Gamma'
\subseteq \clos{\varphi}$ and ${\sim} \in \{\lessdot, \doteq, \gtrdot\}$, we
write $\Gamma \sim \Gamma'$ meaning $\struct(\Gamma) \sim \struct(\Gamma')$.

A tableau for $\varphi$ is a tree built on top of a set of nodes $N$. Each node
$u\in N$ has four labels: $\Gamma(u)\subseteq \clos{\phi}$, $\smb(u)\in\Sigma$,
$\stack(u)\in N\cup\{\bot\}$, $\ctx(u)\in N\cup\{\bot\}$. Each node $u$ is a \emph{push}, \emph{shift}, or \emph{pop} node if, respectively, $\smb(u)\lessdot \Gamma(u)$, $\smb(u)\doteq \Gamma(u)$, or $\smb(u)\gtrdot \Gamma(u)$.

The tableau is built from $\varphi$ starting from the root $u_0$ which is
labelled as $\Gamma(u_0)=\{\varphi\}$, $\smb(u_0)=\#$, $\stack(u_0)=\bot$,
$\ctx(u_0)=\bot$. The tree is built by applying a set of \emph{rules} to each
leaf. Each rule may add new children nodes to the given leaf, while others may
\emph{accept} or \emph{reject} the leaf. The construction continues until every
leaf has been either accepted or rejected. The tableau rules can be divided into
\emph{expansion}, \emph{termination}, \emph{step}, and \emph{guess} rules. 

To each leaf of the tree, at first \emph{expansion rules}
are applied, which are summarised in \cref{table:expansion}. Each rule works as
follows. If the formula $\psi$ in the leftmost column belongs to $\Gamma(u)$,
then for each $i\in\set{1,2,3}$ for which $\Gamma_i$ is given in
\cref{table:expansion}, a child $u_i$ is added to $u$, whose labels are
identical to $u$ excepting that
$\Gamma(u_i)=(\Gamma(u)\setminus\set{\psi})\cup\Gamma_i$. If multiple rules can
be applied, the order in which they are applied does not matter. 

When no expansion rules are applicable to a leaf $u$, and
$\Gamma(u)\cap(\Sigma\cup\set{\#})=\emptyset$, then one child $u_a$, for each
$a\in\Sigma\cup\set{\#}$, is added to $u$ whose labels are the same as $u$
except
that $\Gamma(u_a)=\Gamma(u)\cup\set{a}$.

\begin{table}[t]
  \centering
  \caption{Expansion rules, where $t\in\set{u,d}$.}
  \label{table:expansion}
  \renewcommand\arraystretch{1.2}%
  \begin{tabular}{c @{\hspace{1em}} c c c}\toprule
    $\psi\in\Gamma(u)$ & $\Gamma_1$ & $\Gamma_2$ & $\Gamma_3$ \\\midrule
    $\alpha \land \beta$ & $\set{\alpha,\beta}$\\
    \midrule
    $\alpha \lor \beta$ & $\set{\alpha}$ & $\set{\beta}$\\
    $\lhuuntil{\alpha}{\beta}$  &
    $\set{\alpha,\lhnext{u} (\lhuuntil{\alpha}{\beta})}$ &
    $\set{\beta}$\rlap{~(only if condition~1 holds)}\\
    $\lhduntil{\alpha}{\beta}$  &
    $\set{\alpha,\lhnext{d} (\lhduntil{\alpha}{\beta})}$ &
    $\set{\beta}$\rlap{~(only if condition~2 holds)}\\
    $\lcrelease{t}{\alpha}{\beta}$ &
    $\{\alpha, \beta\}$ &
    $\{\beta, \lwnextsup{t} (\lcrelease{t}{\alpha}{\beta}), \lwcnext{t} 
      (\lcrelease{t}{\alpha}{\beta})\}$\\
    $\lhurelease{\alpha}{\beta}$ & $\set{\alpha,\beta}$ & 
    $\{\beta, \lwhunext(\lhurelease{\alpha}{\beta})\}$\\
    \midrule
    $\lguntil{t}{\chi}{\alpha}{\beta}$ &
    $\{\beta\}$ & 
    $\{\alpha, \allowbreak \lnextsup{t} (\lguntil{t}{\chi}{\alpha}{\beta})\}$ &
    $\{\alpha, \allowbreak \lcnext{t} (\lguntil{t}{\chi}{\alpha}{\beta})\}$\\
    $\lhdrelease{\alpha}{\beta}$ &
    $\emptyset$ & $\set{\alpha,\beta}$ &
    $\set{\beta, \lwhunext(\lhurelease{\alpha}{\beta})}$\\[-2ex]
    & \multicolumn{3}{c}{\upbracefill}\\
    & \multicolumn{3}{c}{(only if condition~2 holds)}\\
    \bottomrule\\[-2.5ex]
    condition~1: & \multicolumn{3}{l}{the closest step ancestor of $u$
    is a \emph{pop} node $u_p$}\\
    & \multicolumn{3}{l}{such that $\Gamma(\ctx(u_p)) \lessdot \Gamma(u_p)$}\\
    condition~2: & \multicolumn{3}{l}{the closest step ancestor of $u$
    is a \emph{push} or \emph{shift} node}\\
    \bottomrule
  \end{tabular}
\end{table}

When no expansion rules are applicable to a leaf $u$ and
$\Gamma(u)\cup\Sigma\ne\emptyset$, $u$ is called a \emph{step} node. In this
case, \emph{termination} rules are checked to decide whether the leaf can be
either rejected or accepted. Rejecting rules are described in
\cref{table:rejecting}. Most rules depend on the type of the leaf node $u$ where
they are applied (\ie it being a push, pop, or shift node), and the type of the
closest step ancestor $u_s$ of $u$. The rule in a given row of the table fires
when $u$ and $u_s$ are of the stated type (if any) and where the condition in
the last column is met. In this case, $u$ is rejected. We need to set up the
following terminology in order to understand some of those rules.

\begin{definition}[Fulfillment of a chain next operator]
  \label{def:xnext-fulfill}
  A $\lcdnext \alpha$ operator is said to be \emph{fulfilled} in a node $u$
  iff $\lcdnext \alpha \in \Gamma(u)$,
  and there exists a pop node descendant $u_p$ such that $\ctx(u_p) = u$ and:
  \begin{enumerate}
  \item $\Gamma(u) \lessdot \Gamma(u_p)$ or $\Gamma(u) \doteq \Gamma(u_p)$, and
  \item $\alpha \in \Gamma(u_s)$, where $u_s$ is the closest push or shift   node descending from $u_p$.
  \end{enumerate}
  Replace $\lcdnext$ with $\lcunext$ and $\lessdot$ with $\gtrdot$ for the upward case.
  \end{definition}

\begin{definition}[Pending node]
  \label{def:pending}
  A node $u$ is \emph{pending} iff either:
  \begin{enumerate}
    \item $u$ is a push node and no pop node $u_p$ exists such that $\stack(u_p) = u$, or
    \item $u$ is a shift node and no pop node $u_p$ exists such that $\stack(u_p) = \stack(u)$.
  \end{enumerate}
\end{definition}

\begin{definition}[Equivalent nodes]
  \label{def:equivalent}
  Two nodes $u$ and $u'$ belonging to the same branch are said to be \emph{equivalent} if the following hold:\\
  \begin{tabular}{p{.4\linewidth} p{.6\linewidth}}
  \begin{enumerate}
  \item $\Gamma(u) = \Gamma(u')$;
  \item $\smb(u) = \smb(u')$;
  \end{enumerate}
  &
  \begin{enumerate}[start=3]
  \item $\Gamma(\stack(u)) = \Gamma(\stack(u'))$; and
  \item $\Gamma(\ctx(u)) = \Gamma(\ctx(u'))$.
  \end{enumerate}
  \end{tabular}
\end{definition}

\newcommand\newterm[1]{%
  \refstepcounter{termrule}\arabic{termrule}.%
  \label{#1}%
}

\begin{table}[p]
  \caption{Rejecting termination rules.}
  \label{table:rejecting}
  \centering
  \renewcommand\arraystretch{1.5}%
  \begin{tabularx}{\linewidth}{r @{\hspace{1em}} l @{\hspace{1em}} l @{\hspace{1em}} X }\toprule
    $n^{\circ}$ & type of $u$ & type of $u_s{}^1$ & condition \\\midrule
    \newterm{rule:contradiction} 
      & & & $\set{p,\neg p}\subseteq\Gamma(u)$\\
    \newterm{rule:conflict} 
      & & & $|\Gamma(u)\cap\Sigma|>1$\\
    \newterm{rule:end} 
      & & & $\set{\psi,\#}\subseteq\Gamma(u)$ and $\psi$ is strong${}^2$\\
    \newterm{rule:check:pnext} 
      & & push/shift & $\Gamma(u_s)\gtrdot\Gamma(u)$ and some 
      $\ldnext \alpha \in \Gamma(u_s)$${}^1$ \\
      & & push/shift & $\Gamma(u_s) \lessdot \Gamma(u)$ and some 
      $\lunext \alpha \in \Gamma(u_s)$${}^1$ \\
    \newterm{rule:check:wpnext} 
      & & push/shift & 
      $\Gamma(u_s) \lessdot \Gamma(u)$
      or $\Gamma(u_s) \doteq \Gamma(u)$,\newline
      and some $\lwdnext \alpha \in \Gamma(u_s)$,
      but $\alpha \not\in \Gamma(u)$\\
    & & push/shift & 
      $\Gamma(u_s) \gtrdot \Gamma(u)$
      or $\Gamma(u_s) \doteq \Gamma(u)$,\newline
      and some $\lwunext \alpha \in \Gamma(u_s)$,
      but $\alpha \not\in \Gamma(u)$\\
    \newterm{rule:check:xnext} & pop & & 
      $\lcnext{t} \alpha$ is \emph{not} fulfilled in $u'$,\newline
      for some $u'\in G$ such that $\lcnext{t} \alpha \in \Gamma(u')$${}^3$,
      for $t \in \set{d, u}$\\
    \newterm{rule:check:wxnext} & push & pop &
      $\lwcdnext \alpha \in \ctx(u_s)$ and $\alpha \not\in \Gamma(u)$\\
      & shift & pop &
      $\lwcnext{t} \alpha \in \ctx(u_s)$ and $\alpha \not\in \Gamma(u)$, for $t \in \set{d, u}$\\
      & pop & pop &
      $\lwcunext \alpha \in \ctx(u_s)$ and $\alpha \not\in \Gamma(u)$\\
    \newterm{rule:check:hnextu} & pop & & 
      $\lhunext \alpha \in \Gamma(\stack(u))$ and 
      $\Gamma(\ctx(u)) \mathrel{\not\kern-2.5pt\lessdot} \Gamma(u)$\\
      & push & pop & $\lhunext \alpha \in \Gamma(\stack(u_s))$ and $\alpha \not\in \Gamma(u)$\\
      & push & push/shift & $\lhunext \alpha \in \Gamma(u)$\\
      & shift & & $\lhunext \alpha \in \Gamma(u)$\\
    \newterm{rule:check:whnextu} & push & pop &
      $\lwhunext \alpha \in \Gamma(\stack(u_s))$, $\stack(u_s)$ is a push node,
      the closest step ancestor of $\stack(u_s)$ is a pop node, and $\alpha \not\in \Gamma(u)$\\
    \newterm{rule:check:hnextd} & pop & & 
      $\lhdnext \alpha \in \Gamma(\ctx(u))$ and
      $\smb(\stack(u)) \doteq \Gamma(u)$ \\
    & pop & push/shift & $\lhdnext \alpha \in \Gamma(\ctx(u))$ \\
    & pop & pop & $\lhdnext \alpha \in \Gamma(\ctx(u))$ and
      $\alpha \not\in \Gamma(\ctx(u_s))$${}^1$\\
    & pop/shift & pop/shift & $\lhdnext \alpha \in \Gamma(u_s)$${}^1$\\
    \newterm{rule:check:whnextd} & pop & pop & 
      $\lwhdnext \alpha \in \Gamma(\ctx(u))$,\newline 
      $\smb(\stack(u)) \gtrdot \Gamma(u)$,
      and $\alpha \not\in \Gamma(\ctx(u_s))$\\
    \newterm{rule:check:huntild} & pop/shift & push/shift & 
      $\lhduntil{\alpha}{\beta} \in \Gamma(u_s)$\\
    & push/shift & pop & 
      $\lhdrelease{\alpha}{\beta}$ appears in one of the nodes between 
      $\ctx(u_s)$\newline
      and the closest step ancestor of $u_s$ (exclusive)\\
    & pop & pop & 
      $\alpha, \lwhdnext (\lhdrelease{\alpha}{\beta}) \not\in 
      \Gamma(\ctx(u_s))$,
      $\alpha, \beta \not\in \Gamma(\ctx(u_s))$, and\newline
      $\lhdrelease{\alpha}{\beta}$ appears in one of the nodes between 
      $\ctx(u_s)$\newline
      and the closest step ancestor of $u_s$ (exclusive)\\
    \newterm{rule:prune} & push/shift & & 
      there is a \emph{pending} ancestor $u_i$ of $u$ \emph{equivalent} to 
      $u$${}^4$\\
    \bottomrule
    \multicolumn{4}{l}{%
      ${}^1$~$u_s$ is the closest step ancestor of $u$
    }\\[-1.5ex]
    \multicolumn{4}{l}{%
      ${}^2$~$\psi$ is strong if it is a positive literal or a strong tomorrow%
    }\\[-1.5ex]
    \multicolumn{4}{l}{%
      ${}^3$~$G = \{\stack(u)\} \cup \{u' \mid \stack(u') = \stack(u) \text{ and $u'$ is a shift node}\}$
    }\\[-1.5ex]
    \multicolumn{4}{l}{%
      ${}^4$~See \cref{def:pending,def:equivalent}.
    }\\
    \bottomrule
  \end{tabularx}
\end{table}

In contrast to rejecting rules, there is only one simple \emph{accepting rule}:
$u$ is accepted when $\Gamma(u)=\set{\#}$ and $\stack(u)=\bot$.

If no termination rules fire on a step node $u$, the construction can proceed by
a \emph{temporal step}. To understand how it works, we need the following
notation: given a node $u$ and a unary temporal operator $\odot$, we denote the
set of all the formulas that appear as arguments of $\odot$ inside $\Gamma(u)$
as $\Gargs_\odot(u) = \set{\alpha \mid \mathop{\odot} \alpha \in \Gamma(u)}$,
and for a set of operators $\set{\odot_1,\ldots,\odot_n}$ we define
$\Gargs_{\odot_1,\ldots,\odot_n}(u)=\Gargs_{\odot_1}(u)\cup\ldots\cup\Gargs_{\odot_n}(u)$.
The temporal step consists in two parts: the application of one \emph{step}
rule, and of one \emph{guess} rule. The step rules, summarised in
\cref{table:steps}, are chosen depending on the type of the leaf at hand, and of
its closest step ancestor. Each rule adds exactly one child $u'$ to the leaf
$u$, whose label is described in the table. The child $u'$ is then fed to one of
the \emph{guess} rules described in \cref{table:guess}. The applicability of the
guess rules depend on the type of $u$ and some other conditions, in a way such
that in each case at most one guess rule is applicable to $u'$. If any is
applicable, the selected rule defines a set of formulas $\Gargs$ as described in
the table, and for each $G\subseteq\Gargs$ adds a child $u''_G$ such that
$\Gamma(u''_G)=\Gamma(u')\cup G$, $\smb(u''_G)=\smb(u')$,
$\stack(u''_G)=\stack(u')$, and $\ctx(u''_G)=\ctx(u')$. After the temporal step
is completed, the construction continues with the expansion rules again, and
everything repeats. 

We can now sketch a soundness and termination argument for the tableau.

\begin{restatable}[Soundness]{theorem}{soundnessthm}
  If the tableau for $\phi$ has an accepted branch, then $\phi$ is satisfiable.
\end{restatable}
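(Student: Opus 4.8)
The plan is to extract, from an accepted branch of the tableau, an OP word $w$ together with an accepting run of the "tableau automaton'' along that branch, and then prove by structural induction on $\operatorname{clos}(\varphi)$ that every formula appearing in the $\Gamma$-label of a step node is satisfied at the corresponding position of $w$. First I would identify the \emph{spine} of the accepted branch: the sequence of step nodes $u_0, u_1, \dots, u_n$ visited along the branch (the intermediate expansion/guess nodes only refine $\Gamma$ via locally sound rewritings, so they can be collapsed). Each step node carries a unique structural proposition $\operatorname{smb}$ and $\Gamma \cap \Sigma$, which lets me read off the position labels $P(i)$ and, through the $\lessdot/\doteq/\gtrdot$ pattern dictated by $M$ between consecutive $\operatorname{smb}(u_i)$ and $\Gamma(u_i)$, recover the OPM-induced chain relation $\chain$ on $\{0,\dots,n\}$. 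The $\operatorname{stack}$ and $\operatorname{ctx}$ pointers are exactly the bookkeeping that mirrors the OPA stack, so I must check that they are maintained consistently by the step rules in \cref{table:steps} — i.e. that the branch describes a genuine computation $\langle w\#, q_I, \bot\rangle \vdash^* \langle \#, q_F, \bot\rangle$ in the sense of the chain/support correspondence recalled after \cref{def:chain}. The accepting rule ($\Gamma(u)=\{\#\}$, $\operatorname{stack}(u)=\bot$) guarantees the run ends with empty stack at an end position, and the root initialization gives the start.

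The core is then the \textbf{truth lemma}: for every step node $u_i$ on the spine and every $\psi \in \Gamma(u_i)$, $(w,i) \models \psi$; and symmetrically that if the "weak/negated'' witnesses are present the corresponding formula fails. I would prove this by induction on the structure of $\psi$, using the expansion rules of \cref{table:expansion} for the Boolean and fixpoint-unfolding cases (each rule is a sound tautological/semantic equivalence, e.g. $\lguntil{t}{\chi}{\alpha}{\beta} \equiv \beta \lor (\alpha \land (\lnextsup{t}(\dots) \lor \lcnext{t}(\dots)))$), and the termination rules of \cref{table:rejecting} for the "promise'' cases. Concretely: a node survives the rejecting rules only if every $\lnextsup{t}\alpha$, $\lcnext{t}\alpha$, $\lhnext{t}\alpha$ etc. is actually \emph{fulfilled} further down the branch (this is the content of \cref{def:xnext-fulfill}, \cref{def:pending}, and \cref{rule:check:xnext,rule:check:hnextd,rule:check:hnextu,rule:check:huntild}), so the witness position promised by each next/until operator exists in $w$ and carries the required subformula, to which the induction hypothesis applies. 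For the weak operators and the release operators one uses the dual: the rejecting rules forbid a $\lwcnext{t}\alpha$-obligation from ever meeting a matching chain position without $\alpha$, which is precisely the universally-quantified semantic clause. The until/release operators additionally require that the path discipline in the tableau (which next-operator is unfolded at each step, governed by conditions 1–2 and the push/shift/pop typing) coincides with the DSP/USP/hierarchical-path definitions (\cref{def:summary}, and the hierarchical-path definitions); this I would establish as a separate combinatorial lemma relating the step-node typing along the branch to the chain relation.

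The \textbf{main obstacle} I anticipate is the "eventualities are discharged'' direction for the context-free modalities — ensuring that a $\lcnext{t}\alpha$ or $\lhuntil{t}{\alpha}{\beta}$ promise cannot be postponed forever. For ordinary LTL-style tableaux this is where one needs a finiteness/pruning argument; here it is delivered by \cref{rule:prune} (prune on an equivalent \emph{pending} ancestor) together with the fact that a leaf is only \emph{accepted} when the stack is empty, so every push node on the spine is eventually matched by a pop (no node stays pending on an accepted branch) and every pending chain-next obligation attached to it is checked by \cref{rule:check:xnext}/\cref{rule:check:wxnext} at the matching pop. Turning this into a clean argument requires carefully matching the set $G = \{\operatorname{stack}(u)\} \cup \{u' \mid \operatorname{stack}(u') = \operatorname{stack}(u), u' \text{ a shift node}\}$ from \cref{rule:check:xnext} to the positions of a single chain in $w$, and checking that the closest-push-or-shift-descendant-of-a-pop used in \cref{def:xnext-fulfill}(2) indeed names the position where the body of the chain "bottoms out'', matching clause 6 of the POTL$_f$ semantics. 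The hierarchical operators are the most delicate, since their witness position $j = \min\{k \mid \dots\}$ is defined by a minimality condition that must be shown to correspond to the "closest step ancestor/descendant'' side-conditions scattered across \cref{rule:check:hnextu,rule:check:hnextd,rule:check:whnextu,rule:check:whnextd,rule:check:huntild}; I would isolate this as the technical heart of the proof and handle $d$ and $u$ variants separately.
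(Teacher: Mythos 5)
Your plan follows essentially the same route as the paper's proof: construct the OP word from the propositional labels of the push/shift step nodes, establish a combinatorial lemma identifying chain supports in the word with the $\stack$/$\ctx$ bookkeeping along the branch (the paper's \cref{lemma:soundness-help}), and then prove a truth lemma by structural induction using the expansion rules for the local cases and the non-firing of the rejecting rules (in particular \cref{rule:check:pnext,rule:check:xnext} and the fulfillment notion of \cref{def:xnext-fulfill}) for the eventualities, with the accepting rule guaranteeing every push is eventually popped. Your identification of the chain-support correspondence and the hierarchical operators as the technical core matches the paper, which in fact only spells out the $\lnextsup{t}$ and $\lcnext{t}$ cases and defers the rest to "similar arguments."
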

\begin{proof}[Sketch]
  $\clos{\phi}$ is finite, and so is the number of possible node labels. Thus,
  unless they are rejected by a rule other than \ref{rule:prune}, all branches
  of the tableau must eventually reach a node that is \emph{equivalent} (cf.\
  \cref{def:equivalent}) to a previous one. Then, they are rejected by
  \cref{rule:prune}. Thus, once fully expanded, the tableau for a formula $\phi$
  is also finite. Then, soundness of the tableau can be proved by building a
  word out of any accepted tableau branch, with a mapping from push and shift
  step nodes of the branch to letters in the word. Chain supports in the word
  correspond to sequences of step nodes.
  See \ifthenelse{\boolean{arxiv}}{\cref{app:soundness}}{\cite{arxiv}}
  for the full proof.
\end{proof}

\begin{table}[t]
  \caption{Step rules}
  \label{table:steps}
  \centering
  \begin{tabular}{c *{5}{@{\hspace{1em}} c}}\toprule
    $u$ & $u_s$${}^1$ & $\Gamma(u')$ & $\smb(u')$ & $\stack(u')$ & $\ctx(u')$ \\\midrule
    push & push/shift &
      $\Gargs_{\ldnext,\lunext}(u)$ &
      $\struct(\Gamma(u))$ & $u$ & $u_s$ or $\bot$${}^2$ \\
    push & pop &
      $\Gargs_{\ldnext,\lunext}(u)$ &
      $\struct(\Gamma(u))$ & $u$ & $\ctx(u_s)$ or $\bot$${}^2$ \\
    shift &  &
      $\Gargs_{\ldnext,\lunext}(u)$ &
      $\struct(\Gamma(u))$ & $\stack(u)$ & $\ctx(u)$ \\
    pop &  &
      $\Gamma(u)$ &
      $\smb(\stack(u))$ & $\stack(\stack(u))$ & $\ctx(\stack(u))$ \\
    \bottomrule\\[-1.75ex]
    \multicolumn{5}{l}{
      ${}^1$~$u_s$ is the closest step ancestor of $u$
    }\\
    \multicolumn{5}{l}{
      ${}^2$~$\ctx(u')=\bot$ if $\stack(u)=\bot$
    }\\
    \bottomrule
  \end{tabular}
\end{table}

\begin{table}[t]
  \caption{Guess rules}
  \label{table:guess}
  \centering
  \begin{tabular}{c @{\hspace{1em}} c @{\hspace{1em}} c}\toprule
    $u$ & only if & $\Gargs$ \\\midrule
    push/shift & & $\mathcal{G}_{\lwdnext}(u_s)\cup\mathcal{G}_{\lwunext}(u_s)
    \cup\mathcal{G}_{\lhdnext}(u_s)\cup\mathcal{G}_{\lwhdnext}(u_s)$ \\
    pop & $u_c\ne\bot$${}^1$ & 
    $\bigcup\begin{cases}
      \mathcal{G}_{\lcdnext}(u_c) \cup \mathcal{G}_{\lcunext}(u_c)  \\
      \mathcal{G}_{\lwcdnext}(u_c) \cup \mathcal{G}_{\lwcunext}(u_c) \\
      \mathcal{G}_{\lhdnext}(u_c) \cup \mathcal{G}_{\lwhdnext}(u_c) \\
      \mathcal{G}_{\lwhunext}(\stack(u_s)) \cup \mathcal{G}_{\lhunext}(\stack(u_s))\end{cases}$\\\midrule
    \multicolumn{3}{l}{${}^1$~$u_s$ is the closest step ancestor of $u$ and $u_c=\ctx(u_s)$}\\\bottomrule
  \end{tabular}
\end{table}



\section{SMT Encoding of the Tableau}
\label{sec:encoding}

Our technique for symbolic model checking of \ac{POTL_f} properties does not directly
construct the tableau described in \cref{sec:tableau}, but rather, it
\emph{encodes} it into SMT formulas that can be efficiently handled by
off-the-shelf solvers. Iterating over a growing index $k > 1$, at each step our
procedure produces an SMT formula that encodes the branches of the tableau of
length up to $k$ step nodes, such that the formula is satisfiable if and only if
an accepted branch of the tableau exists. If not, we increment $k$ and proceed.
In this respect, the procedure reminds of classic \emph{bounded model
checking}~\cite{DBLP:journals/ac/BiereCCSZ03,ClarkeBRZ01}.
Here we summarize the working principles of the tableau encoding. The full
details are available in \ifthenelse{\boolean{arxiv}}{\cref{app:encoding}}{\cite{arxiv}}.

The encoding produces formulas whose \emph{models}, when they exist, represent
single branches of the tableau. At a given step $k$, the formulas are
interpreted over a restricted form of quantified\footnote{Thanks to finite
sorts, quantifiers are in fact expanded to disjunctions/conjunctions.} EUF, over
two \emph{finite, enumerated, ordered}\footnotemark\ sorts: a sort $\N_k$, of
exactly $k+1$ elements used to identify the nodes in the branch, and a sort
called $\S$ that contains a finite set of symbols used in the encoding to
represent the letters of the formula's alphabet. We suppose to have a finite
number of constants for the values in $\S$. Among those, we have $p\in\S$ for
each $p\in\Sigma\cup AP$. Others will be introduced when needed. We also exploit
a fixed arbitrary ordering between elements of $\N_k$, and we abuse notation by
denoting the constants for sort $\N_k$ as $0,1,\ldots,k$, and writing $x+1$ and
$x-1$ for an element $x\in\N_k$ to denote its predecessor and successor in this
order.

\footnotetext{%
    The sort returned by the \texttt{Z3\_mk\_finite\_domain\_sort()} function of
    the Z3 C API.%
}

For each proposition $p\in\Sigma\cup AP$, the encoding uses a binary predicate
$\Gamma(p, x)$ whose first argument ranges among $\S$ and the second among
$\N_k$. The intuitive meaning of $\Gamma(p,x)$ is that $p\in\Gamma(u)$ if $u$ is
the $x$-th step node of the current branch of the tableau. The encoding also
uses some function symbols. A unary predicate $\bar\Sigma$ ranging over $\S$
tells which symbols from $\S$ are structural symbols. A function
$\smb(x):\N_k\to \S$ is used to represent the $\smb(u_x)$ symbol. A function
symbol $\struct(x):\N_k\to\S$ represents $\Gamma(u_x) \cap \Sigma$. Two
functions $\stack(x) : \N_k \to \N_k$ and $\ctx(x) : \N_k \to \N_k$ represent
the corresponding functions in the tableau. When $\stack(u)=\bot$, we denote it
as $\stack(x)=0$, and similarly for $\ctx(x)$.

For any strong or weak \emph{next} or \emph{chain next} temporal formula in
the closure of $\phi$ we also introduce a corresponding
\emph{propositional} symbol in $\S$.  Specifically, for each formula
$\lnextsup{t}\psi$, $\lcnext{t}\psi$, $\lwnextsup{t}\psi$ and
$\lwcnext{t}\psi$ in the closure, $\S$ contains the following propositional
symbols, which we call \emph{grounded}: $(\lnextsup{t}\psi)_G$,
$(\lcnext{t}\psi)_G$, $(\lwnextsup{t}\psi)_G$, $(\lwcnext{t}\psi)_G$, and
$(\lnextsup{t}\psi)_G$, $(\lcnext{t}\psi)_G$, $(\lwnextsup{t}\psi)_G$,
$(\lwcnext{t}\psi)_G$.

The core building block of the encoding is the following \emph{normal form} for
\ac{POTL_f} formulas.
\begin{definition}[Next Normal Form]
  \label{def:encoding:xnf}
  Let $\phi$ be a \ac{POTL_f} formula. The \emph{next normal form} of $\phi$, denoted 
  $\xnf(\phi)$ is defined as follows:
  \begin{align*}
    &
    \begin{aligned}
      &\xnf(p) = p\quad\text{for $p\in\Sigma$} &
      &\qquad\qquad\xnf(\neg p) = \neg p\quad\text{for $p\in\Sigma$}\\
      &\xnf(\lwnextsup{t}\psi) = \lwnextsup{t}\psi &
      &\qquad\qquad\xnf(\lwcnext{t}\psi) = \lwcnext{t}\psi\\
    \end{aligned}\\
    &\xnf(\alpha\circ\beta) = 
      \xnf(\alpha)\circ\xnf(\beta)\quad\text{for $\circ\in\{\lor,\land\}$}\\
    &\xnf(\lcuntil{t}{\alpha}{\beta}) = 
      \xnf(\beta) \lor \big( 
        \xnf(\alpha) \land (
          \lnextsup{t}(\lcuntil{t}{\alpha}{\beta}) \lor
          \lcnext{t}(\lcuntil{t}{\alpha}{\beta})
        )
      \big)\\
    &\xnf(\lcrelease{t}{\alpha}{\beta}) = 
      \xnf(\beta) \land \big( 
        \xnf(\alpha) \lor (
          \lwnextsup{t}(\lcrelease{t}{\alpha}{\beta}) \land
          \lwcnext{t}(\lcrelease{t}{\alpha}{\beta})
        )
      \big)
  \end{align*}
\end{definition}
Intuitively, $\xnf(\phi)$ encodes the \emph{expansion rules} of the tableau
(\cref{table:expansion}). Given $\phi$ and a fresh variable $x$ of sort $\N_k$,
we denote as $\xnf(\phi)_G$ the formula obtained from $\xnf(\phi)$ by replacing
any proposition $p$ with $\Gamma(p, x)$. Note that $\xnf(\phi)_G$ does not
contain temporal operators: it is a first-order formula with a single free
variable $x$.

We can now show the encoding itself. We start by constraining
the meaning of the $\bar\Sigma$ predicate and the $\struct$ and $\smb$
functions. We define a formula $\phi_{\mathrm{axioms}}$ that states that the
$\bar\Sigma$ predicate identifies structural symbols and the $\struct(x)$ and
$\smb(x)$ functions only return structural symbols, and we write a formula
$\phi_{\mathrm{OPM}}$ that explicitly models the $\lessdot$, $\doteq$ and
$\gtrdot$ relations between symbols in $\S$ as binary predicates in the SMT
encoding. The predicates range over the whole $\S$ but only the relationship
between symbols in $\Sigma$ will matter. With these in place, we can identify
the \emph{type} of each step node depending on the PR between $\smb(x)$
and $\struct(x)$. We encode this by the following three predicates:
\begin{gather*}
  \push(x) \equiv \smb(x) \lessdot \struct(x)\quad
  \shift(x) \equiv \smb(x) \doteq \struct(x)\\
  \pop(x) \equiv \smb(x) \gtrdot \struct(x)
\end{gather*}

A formula $\phi_{\mathrm{init}}$ encodes how the root node of the tableau looks
like. In particular, it includes the conjunct $\xnf(\phi)_G(1)$, to say that
its label contains $\phi$.

We can now encode the step rules of \cref{table:steps}. For space constraints we
only show here the encoding of the step rules concerning \emph{push} nodes
(first two lines of \cref{table:steps}). The encoding of such rules is the
following:
\begin{align*}
  \steprule_{\push}(x) \equiv {} & 
  \smashoperator{\bigwedge_{\lnextsup{t}\alpha\in\clos{\phi}}}\big(
    \Gamma((\lnextsup{t}\alpha)_G, x) \implies \xnf(\alpha)_G(x+1)
  \big)\\
  {} \land {} & \smb(x+1)=\struct(x) \land \stack(x+1)=x \\
  {} \land {} & (\stack(x)=0 \implies \ctx(x+1)=0) \\
  {} \land {} & ((\stack(x)\ne 0 \land (\push(x-1) \lor \shift(x-1))) \implies \ctx(x+1)=x-1) \\
  {} \land {} & ((\stack(x)\ne 0 \land \pop(x-1)) 
    \implies \ctx(x+1)=\ctx(x-1))
\end{align*}

We can similarly obtain two formulas $\steprule_{\shift}(x)$ and
$\steprule_{\pop}(x)$. It is worth to note the first line of the above
definition, where $\xnf(\alpha)$ is imposed to hold on $x+1$ if a next operator
on $\alpha$ is present on $x$.

Next, we can encode the rejecting rules of \cref{table:rejecting}. Since there
are so many of them, we only show some examples
(see \ifthenelse{\boolean{arxiv}}{\cref{app:encoding}}{\cite{arxiv}} for the full list).
What we actually encode is the \emph{negation} of the rejecting rules,
that describes what a node has to satisfy to \emph{not} be rejected. We start to
note that \cref{rule:contradiction} does not need to be encoded, since it just
states that a proposition cannot hold together with its negation, which is
trivially implied by the logic. Then, the simplest ones are
\cref{rule:end,rule:conflict} of \cref{table:rejecting}, and can be encoded as
follows:
\begin{align*}
  \mathrm{r_{\ref{rule:conflict}}}(x) \equiv {} & \forall p\, \forall q (\Sigma(p) \land \Sigma(q) 
  \land \Gamma(p, x) \land \Gamma(q, x) \implies p = q) \\
  \mathrm{r_{\ref{rule:end}}}(x) \equiv {} & \Gamma(\#, x) \implies \bigl(
    \smashoperator{\bigwedge_{\lnextsup{t}\alpha\in\clos{\phi}}} 
      (\neg \Gamma((\lnextsup{t}\alpha)_G, x)) \land
      \bigwedge_{p\in AP} (\neg \Gamma(p, x))
  \bigr)
\end{align*}

We similarly have a formula $r_i(x)$ encoding the negation of each block of
lines from \cref{rule:check:pnext} to \ref{rule:prune}. With these in place, we
define a formula $\unr{\phi}_k$ called the $k$-\emph{unraveling} of $\phi$, that
encodes all the non-rejected branches of the tableau of up to $k$ step nodes.
\begin{align*}
  &\phi_{\mathrm{axioms}} \land \phi_{\mathit{OPM}} \land \phi_{\mathrm{init}}
  \land 
  \forall x \left(x > 1\implies \bigwedge\nolimits_{i=2}^{13} r_i(x)\right)  \land {}\\
  &\forall x \left[
    1\le x < k 
     \implies \left(
    \begin{aligned}
      & (\push(x) \implies \steprule_{\push}(x)) \\
      {}\land{}& (\shift(x) \implies \steprule_{\shift}(x)) \\
      {}\land{}& (\pop(x) \implies \steprule_{\pop}(x))
    \end{aligned}\right)
  \right]
\end{align*}
The only \emph{acceptance} rule of the tableau is encoded by a
formula $e(x)$ defined as $e(x)\equiv \Gamma(\#,x)\land\stack(x)=0$.

Finally, we have the following.
\begin{theorem}
  If $\unr{\phi}_k\land e(k)$ is satisfiable for some $k>0$, then $\phi$ is
  satisfiable.
\end{theorem}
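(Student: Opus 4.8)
The plan is to reduce the theorem to the already-stated Soundness theorem of the tableau (\cref{thm:soundness}, the \texttt{soundnessthm}) by showing that any model of $\unr{\phi}_k \land e(k)$ yields an accepted branch of the tableau for $\phi$. Concretely, given an interpretation $\mathcal{M}$ of the EUF formula $\unr{\phi}_k \land e(k)$ over the sorts $\N_k$ and $\S$, I would read off a sequence of $k+1$ candidate step nodes $u_1, \dots, u_k$ (indexed by the elements $1,\dots,k$ of $\N_k$), setting $\Gamma(u_x) = \set{p \mid \mathcal{M} \models \Gamma(p,x)}$, $\smb(u_x) = \mathcal{M}(\smb(x))$, $\stack(u_x) = u_{\mathcal{M}(\stack(x))}$ (with index $0$ interpreted as $\bot$), and $\ctx(u_x) = u_{\mathcal{M}(\ctx(x))}$ (again $0 \mapsto \bot$). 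The first half of the proof then checks that this sequence is \emph{realizable} as the step-node backbone of an actual tableau branch: between consecutive step nodes one inserts the intermediate nodes produced by the expansion and guess rules, and one verifies that the conjuncts of $\unr{\phi}_k$ exactly guarantee the constraints those rules impose.

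\textbf{Key steps, in order.} First I would argue that $\phi_{\mathrm{axioms}} \land \phi_{\mathit{OPM}}$ forces $\smb$, $\struct$, $\bar\Sigma$ and the PR predicates to behave like their tableau counterparts, so that $\push(x)$, $\shift(x)$, $\pop(x)$ correctly classify node $u_x$ as a push/shift/pop node in the sense of \cref{sec:tableau}. Second, I would show $\phi_{\mathrm{init}}$ makes $u_1$ a legitimate successor of the tableau root $u_0$ (label contains $\phi$ via $\xnf(\phi)_G(1)$, $\smb = \#$, $\stack = \ctx = \bot$), using that $\xnf$ faithfully encodes the expansion rules of \cref{table:expansion} — i.e., a subset $\Gamma$ of $\clos{\phi}$ "propositionally satisfies" $\xnf(\psi)$ iff some fully-expanded leaf below a node carrying $\psi$ has label $\Gamma$. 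Third, for each $1 \le x < k$, the step-rule conjunct ($\steprule_{\push}$, $\steprule_{\shift}$, or $\steprule_{\pop}$ according to the type of $u_x$) guarantees that $u_{x+1}$ is obtained from $u_x$ by a legal temporal step followed by expansion and one guess rule; in particular the clause $\Gamma((\lnextsup{t}\alpha)_G,x) \implies \xnf(\alpha)_G(x+1)$ transports pending next-requirements correctly, and the $\stack/\ctx$ clauses match the bookkeeping in \cref{table:steps}. Fourth, the conjunct $\forall x (x > 1 \implies \bigwedge_{i=2}^{13} r_i(x))$, being by construction the negation of the rejecting rules \cref{rule:conflict}--\ref{rule:prune}, certifies that no $u_x$ triggers a rejecting rule (and \cref{rule:contradiction} is automatic since $\Gamma(p,x)$ and $\Gamma(\neg p, x)$ cannot both be forced — more precisely, $\xnf$ keeps literals and their negations separate and $r_{\ref{rule:conflict}}$ handles structural conflicts). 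Fifth, $e(k)$ says $\Gamma(u_k) = \set{\#}$ and $\stack(u_k) = \bot$, which is exactly the accepting rule. Hence the branch $u_0, u_1, \dots, u_k$ (with the elided intermediate expansion/guess nodes) is an accepted branch of the tableau for $\phi$, and the Soundness theorem gives satisfiability of $\phi$.

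\textbf{The main obstacle} is the "realizability" step: verifying that a model of $\unr{\phi}_k$ can genuinely be \emph{unfolded} into a tableau branch, i.e., that the expansion and guess rules can be interleaved between the step nodes $u_x$ and $u_{x+1}$ consistently. The subtlety is that $\unr{\phi}_k$ only constrains the step nodes directly; the intermediate nodes (and especially the guess rules of \cref{table:guess}, which nondeterministically add formulas like $\mathcal{G}_{\lcdnext}(u_c)$ to the child) must be shown to admit \emph{some} choice compatible with the encoded constraints. This requires a careful matching between the grounded propositional symbols $(\lnextsup{t}\psi)_G$, $(\lwcnext{t}\psi)_G$, etc.\ that appear in $\S$ and the formulas that the step and guess rules move around, together with an argument that the $\xnf$-based encoding of expansion is complete (every label forced by the encoding corresponds to a reachable fully-expanded leaf). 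Because this theorem is stated without an in-excerpt proof, I expect the actual argument to defer most of these details to the appendix (\cref{app:encoding}), presenting here only the reduction to \cref{thm:soundness} plus the observation that each conjunct of $\unr{\phi}_k \land e(k)$ mirrors, by construction, a corresponding tableau rule.
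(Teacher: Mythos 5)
Your proposal is correct and matches the argument the paper itself relies on: the paper states this theorem without an explicit proof, precisely because the encoding is constructed so that models of $\unr{\phi}_k \land e(k)$ correspond to accepted tableau branches, and the result then follows from the tableau Soundness theorem exactly as you describe. The realizability obstacle you flag (interpolating expansion and guess nodes between consecutive step nodes) is real but benign, since the guess rules add a child for \emph{every} subset $G\subseteq\Gargs$ and the expansion rules are stated to be order-independent, so any label consistent with the $\xnf$ constraints is reachable as a fully expanded leaf.
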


We exploit this encoding of \ac{POTL_f} satisfiability for model checking a formula $\phi$
through an algorithm that iterates on $k$ starting from $k=1$.
First, we check satisfiability of $\unr{\neg \phi}_k \land \unr{\mathcal{M}}_k$,
where $\unr{\mathcal{M}}_k$ encodes a length-$k$ prefix of a trace
of the program $\mathcal{M}$ to be checked.
We automatically translate programs to OPA whose transitions are labeled with program statements
in the same way as~\cite{AlurBE18,ChiariMPP23},
so that the automaton's stack simulates the program stack.
Such \emph{extended} OPA are then directly encoded into SMT in a straightforward manner,
using the theories of fixed-size bit vectors and arrays to represent variables
(cf.\ \ifthenelse{\boolean{arxiv}}{\cref{app:program-encoding}}{\cite{arxiv}}).
If this satisfiability check fails, it means no trace of $\mathcal{M}$
of length $\geq k$ violates $\phi$, proving that $\mathcal{M}$ satisfies $\phi$.
Otherwise, we check whether $e(k)$ is satisfied when conjoined with the previous assertions.
If it is, then we have found a counterexample trace that violates $\phi$.
Otherwise, we increase $k$ by 1 and repeat.
Since the tableau is finite, we eventually either find a counterexample,
or hit a value of $k$ such that \cref{rule:prune} rejects all branches,
and the initial satisfiability check fails.


\section{Experimental Evaluation}
\label{sec:exp:eval}

We implemented the encoding described in~\cref{sec:encoding} in a SMT-based
model checker that leverages the Z3 SMT
solver~\cite{DBLP:conf/tacas/MouraB08}.
We developed it within POMC~\cite{pomc},
an explicit-state model checker for \ac{POTL} developed by the authors of \cite{ChiariMPP23}.

We compare our SMT-based approach with the explicit-state algorithm powering POMC,
which performs the following steps on-the-fly:
\begin{enumerate*}[label=(\roman*)]
  \item it builds an OPA $\mathcal{A}_\varphi$ encoding the negation
    of the formula $\varphi$ to be checked;
  \item it constructs the synchronized product between $\mathcal{A}_\varphi$
    and the model of the system;
  \item it checks the nonemptiness of the product automaton, witnessing
    a counterexample to the property in the model, in a depth-first
    fashion.
\end{enumerate*}

We ran our experiments on server with a 2.0 GHz AMD CPU and RAM capped at 30 GiB.

\subsection{Description of the benchmarks}
\label{sub:exp:eval:bench}
We evaluate the two tools on a set of benchmarks adapted from
\cite{ChiariMPP23}, divided in three categories (Quicksort, Jensen, Stack).
We modeled all benchmarks in MiniProc, the modeling language of
the POMC tool. The checked formulas are reported in
Table~\ref{table:bench}. Below, we give a brief description of each
category.

\begin{table}[tb]
  \small
  \centering
  \caption{Benchmark formulas.
    The last column states whether they are true (T) or false (F) in each
    model. 
    $\llglob$ is the LTL always, which we implemented as in~\cite{GeattiGM19}.}
  \label{table:bench}
  \renewcommand\arraystretch{1.2}%
  \begin{tabularx}{\textwidth}{c c @{\hspace{.5em}} X r}\toprule
  \parbox[t]{1.2em}{\multirow{6}{*}{\rotatebox[origin=c]{90}{QuickSort}}}
  & 1  & $\lcunext (\lret \land \mathrm{main})$ & T \\
  & 2  & $\lcall \land \mathrm{main} \implies \neg (\lunext \lexc \lor \lcunext \lexc)$ & T \\
  & 3  & $\llglob (\lcall \land \mathrm{qs} \implies \neg (\lunext \lexc \lor \lcunext \lexc))$ & F \\
  & 4  & $\lcunext \mathrm{sorted}$ & F \\
  & 5  & $\llglob (\lcall \land \mathrm{qs} \implies \lcunext \mathrm{sorted})$ & F \\
  & 6  & $\lcdnext (\lhan \land \ldnext (\lcall \land \mathrm{qs} \land \lcunext (\lexc \lor \mathrm{sorted})))$ & T \\
  \midrule
  \parbox[t]{1.2em}{\multirow{4}{*}{\rotatebox[origin=c]{90}{Jensen}}}
  & 8  & $\llglob (\lcall \land \neg P_\mathrm{cp} \implies \neg (\lcduntil{\top}{(\lcall \land \mathrm{read}))})$ & T \\
  & 9  & $\llglob (\lcall \land \neg P_\mathrm{db} \implies \neg (\lcduntil{\top}{(\lcall \land \mathrm{write}))})$ & T \\
  & 10 & $\llglob (\lcall \land ((\mathrm{canpay} \land \neg P_\mathrm{cp}) \lor (\mathrm{debit} \land \neg P_\mathrm{db})) \implies \lunext \lexc \lor \lcunext \lexc)$ & T \\
  & 11 & $\neg (\lcduntil{\top}{(\mathrm{balance} < 0)})$ & T \\
  \midrule
  \parbox[t]{1.2em}{\multirow{4}{*}{\rotatebox[origin=c]{90}{Stack}}}
  & 12 & $\llglob (\mathrm{modified} \implies \neg (\lunext \lexc \lor \lcunext \lexc))$ & T/F \\
  & 13 & $\llglob (\lcall \land (\mathrm{push} \lor \mathrm{pop}) \implies \neg (\lhduntil{\top}{\mathrm{modified}}))$ & T/F \\
  & 14 & $\llglob (\lcall \land (\mathrm{push} \lor \mathrm{pop}) \land \lcdnext \lret \implies$ \newline
         \hspace*{2.5cm} $\neg (\lcduntil{\top}{(\lhan \land \mathrm{Stack} \land (\lcduntil{\neg \lhan}{(\mathrm{T} \land \lunext \lexc)}))}))$ & T/T \\
  \bottomrule
  \end{tabularx}
\end{table}

\paragraph{Quicksort.}
We modeled a Java implementation of the Quicksort sorting algorithm.
The algorithm is implemented as a recursive function $\mathrm{qs}$,
called by the $\mathrm{main}$ function in a \texttt{try-catch} block,
and is applied to an array of integers that may contain
null values, which cause a \texttt{NullPointerException}.
We vary the length of the arrays from 1 to 5 elements
and the width of the elements from 2 to 16 bits.
Formulas 1 and 2 both check that the $\mathrm{main}$ function returns without exceptions,
while 3 checks the same for the $\mathrm{qs}$ (QuickSort) function.
Formulas 4 (resp., Formula 5) states that the array is sorted when the main
function (resp., the $\mathrm{qs}$ function) returns without exceptions.
Finally, Formula 6 states that either $\mathrm{qs}$ throws an exception or
the array is sorted (and $\mathrm{qs}$ returns normally).

\paragraph{Bank Account.}
This category consists of a simple banking application taken from
\cite{JensenLT99} which allows users to withdraw money or check their
balance.  The variable representing the balance is protected by a Java
AccessController, which prevents unauthorized users from accessing it by
raising exceptions. We modeled the balance with an integer variable.
Formula 8 (resp., Formula 9) checks that, whenever a function is called
without having permission to check the balance (resp., to make a payment),
then there is no read-access (resp., write access) to the variable holding
the balance.  The permission of checking the balance and to make a payment
are modeled by the variables $P_\mathrm{cp}$ and $P_\mathrm{db}$,
respectively.
Formula 10 checks that if the functions that check the balance
($\mathrm{canpay}$) and make a payment ($\mathrm{debit}$) are called
without permission, an exception is thrown. Formula 11 checks that the
balance never becomes negative, because payments are only made if the
account has enough money.

\paragraph{Stack.}
We model two C++ implementations of a generic stack data structure taken
from \cite{Sutter97}, where constructors of contained elements may throw
exceptions. Only one of the two implementations is exception safe.  The
\texttt{pop} method of the safe implementation does not return the popped
element, which must be accessed through the \texttt{top} method, and it
performs other operations on a new copy of the internal data structure, to
prevent exceptions from leaving it in an inconsistent state.
In contrast with~\cite{ChiariMPP23} which uses a manually-crafted
abstraction for the elements in the stack, our model implements the stack
with actual arrays of fixed-width integers.
Formulas 12 and 13 check \emph{strong exception safety}~\cite{Abrahams00},
\ie that each operation on the data structure is rolled back if any
functions related to the element type $\mathrm{T}$ throw an exception,
leaving the stack in a consistent state.
Formula 14 checks \emph{exception neutrality}~\cite{Abrahams00}, which
means that exceptions thrown by element functions are always propagated by
the stack's methods.

\setlength{\abovecaptionskip}{0pt}
\setlength{\belowcaptionskip}{0pt}

\begin{figure}[p]
\centering
\begin{subfigure}{1\textwidth}
    \centering
    \includegraphics[width=.8\textwidth, height=6cm]{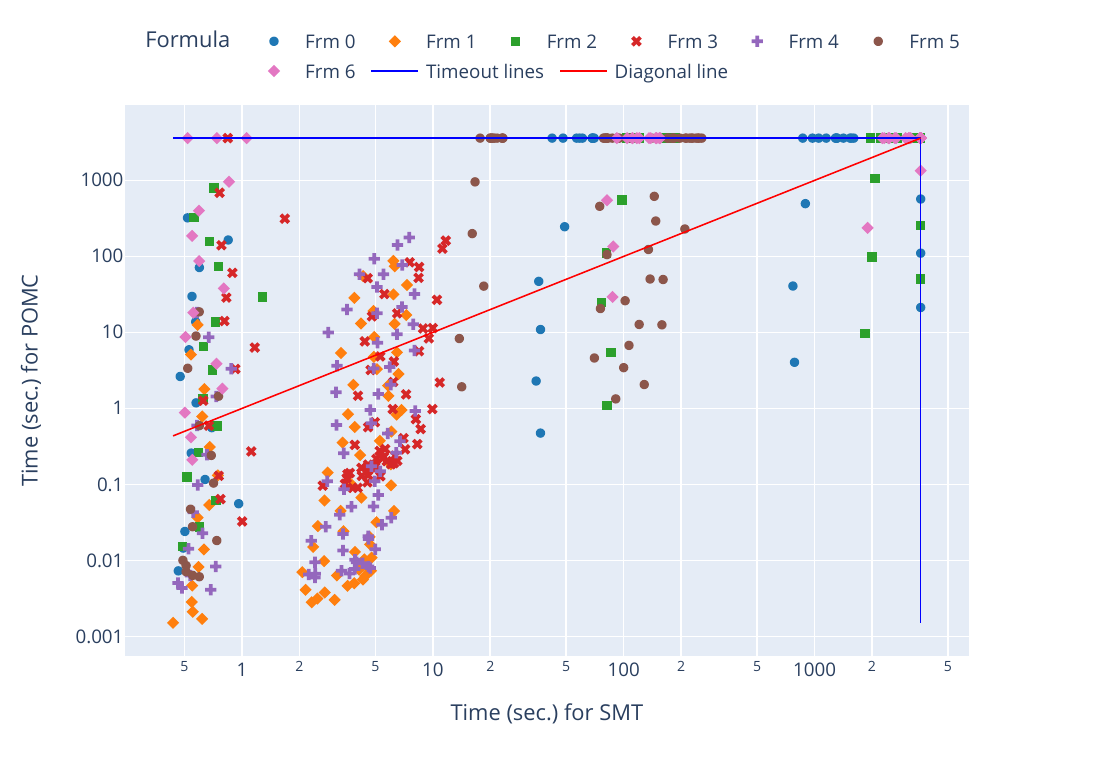}
    \caption{Scatter plot for category Quicksort.}
\end{subfigure}
\begin{subfigure}{1\textwidth}
    \centering
    \includegraphics[width=.8\textwidth, height=6cm]{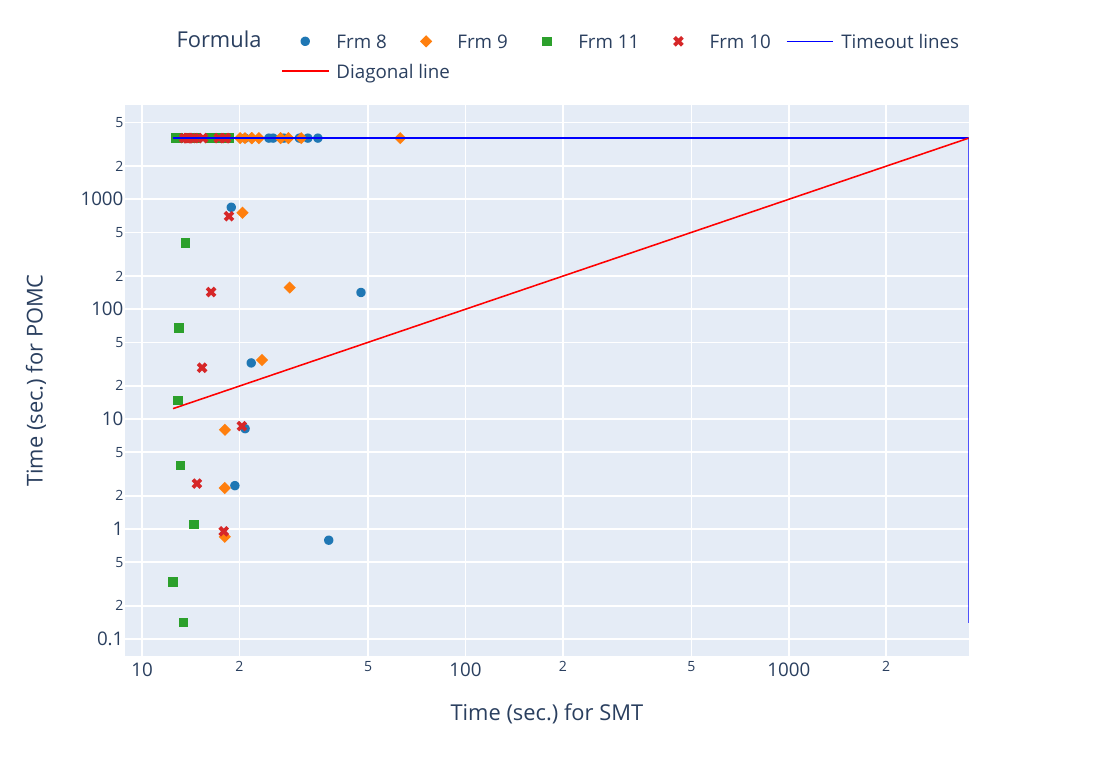}
    \caption{Scatter plot for category Jensen.}
\end{subfigure}
\begin{subfigure}{1\textwidth}
    \centering
    \includegraphics[width=.8\textwidth, height=6cm]{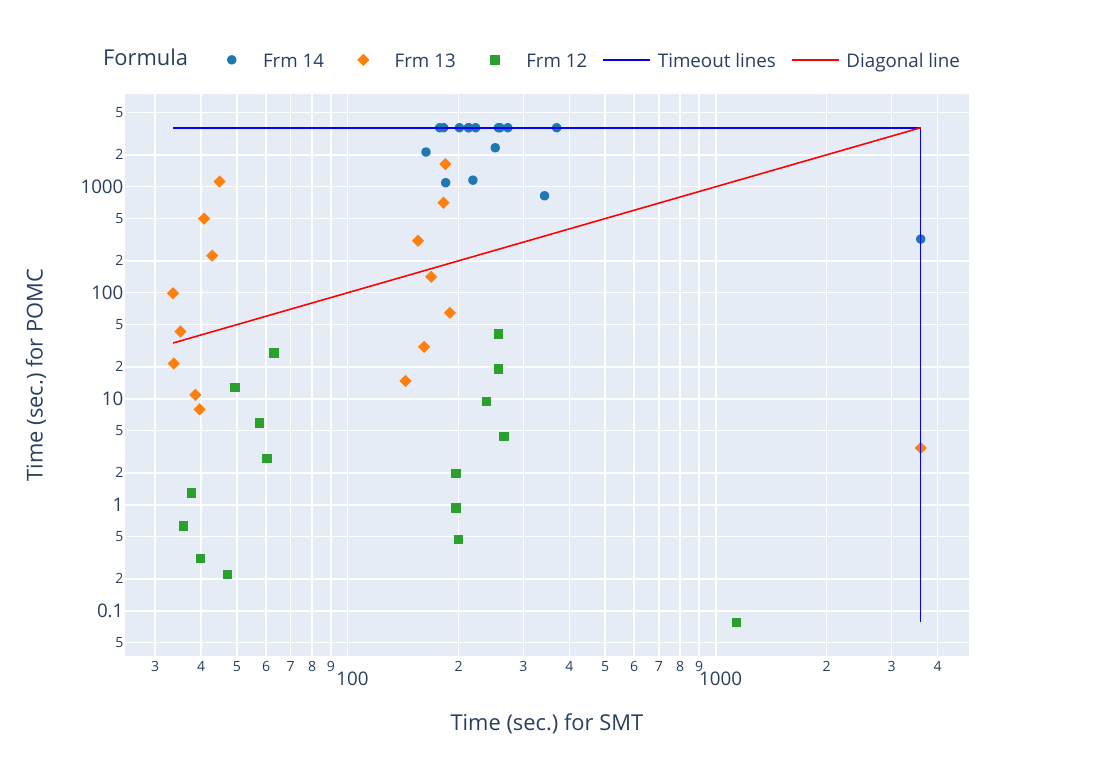}
    \caption{Scatter plot for category Stack.}
\end{subfigure}
\caption{Scatter plots}
\label{fig:scatter:plots}
\end{figure}

\begin{figure}[pt]
\centering
\includegraphics[width=1\linewidth]{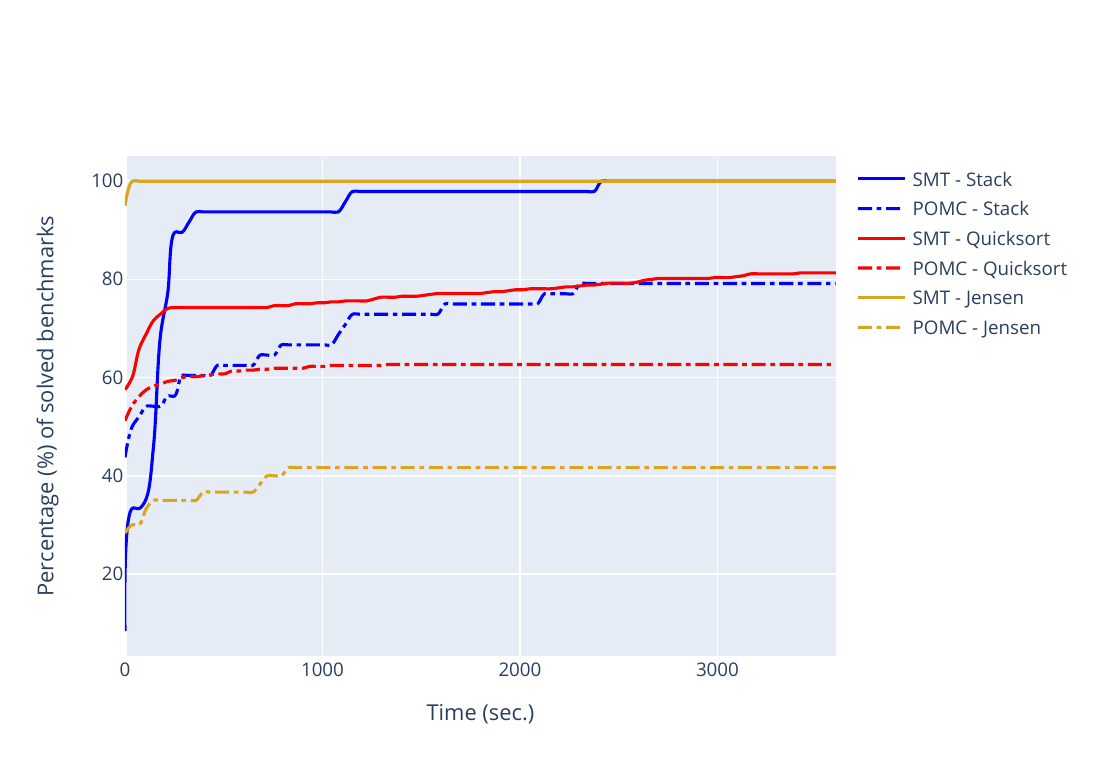}
\caption{Survival plot}
\label{fig:survival:plots}
\end{figure}

\subsection{Description of the plots}

We compare the time (measured in seconds) taken by the SMT-based approach
(in the plots referred to as SMT) with the time taken by POMC,
dividing the plots by the three categories of benchmarks (Quicksort,
Jensen, and Stack). For each category, we show a scatter plot
(\cref{fig:scatter:plots}) and a survival plot (\cref{fig:survival:plots}).

We first look at the scatter plots in~\cref{fig:scatter:plots}. The x-axis
refers to the solving time for the SMT-based approach while the y-axis to
the solving time for POMC, both measured in seconds.
The blue border lines indicate the timeout (set to 3600 seconds) for the
tools, while the red line denotes the diagonal of the plot.

For all three categories of benchmarks, the scatter plots reveal an exponential
blow up for the solving time of the POMC tool; on the contrary, the
SMT-based approach does not incur in such a blow up. As an example, we take
the scatter plot for the Quicksort category in~\cref{fig:scatter:plots}
(a) and we consider the brown circles in the middle of the plot,
corresponding to the Formula 5 of Table~\ref{table:bench} checked on an
array of size 2 containing numbers of increasing bitvector-size.  For the
case of numbers of bitvector-size of 3, 4, 5, and 6 bits, the solving time
of POMC is of 8, 40, 199, and 956 seconds, respectively, while the time
required by the SMT-based approach is of 13, 18, 16 and 16 seconds,
respectively. Moreover, while for bitvector-size greater than 6 bits POMC
reaches always the timeout for Formula 5, the SMT-based approach solves the
benchmarks of all bitvector-size (\ie up to 16 bits) in time always less
than 23 seconds.

A similar consideration can be done for the Jensen and the Stack
categories. Take, for example, the blue circles in~\cref{fig:scatter:plots}
(c) corresponding to Formula 14 in Table~\ref{table:bench}. For this case,
the solving times of the SMT-based approach are consistently better than
the ones of POMC. 
  The reason may be that this formula contains hierarchical operators,
  which tend to yield to automata that make more non-deterministic guesses.
This, in turn, causes the explicit-state model checker to perform, in
general, many steps of backtracking during its depth-first model checking
algorithm. Conversely, in the SMT-based approach, this part is managed
(efficiently) by the DPLL algorithm inside the SMT-solver.

\begin{figure}[tb]
\centering
\includegraphics[width=1\linewidth]{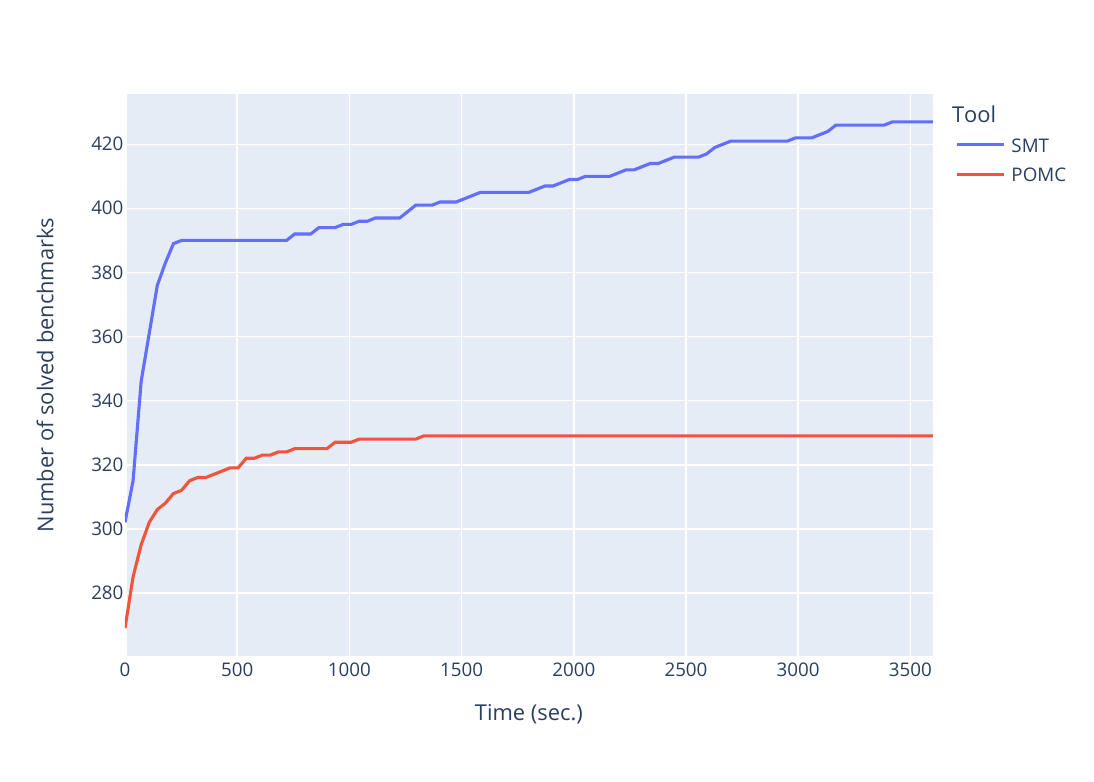}
\caption{Survival plot for the Quicksort category.}
\label{fig:survival:plots:quicksort}
\end{figure}

The exponential trend of POMC is reflected also in the survival plot
(\cref{fig:survival:plots}). Here, the x-axis represents the time (in
seconds) while the y-axis represents the percentage of solved benchmarks.
From the blue and yellow lines in~\cref{fig:survival:plots}, which
correspond to the categories Stack and Jensen, respectively, it is clear
that the POMC tool gets stuck solving (approximately) the 80\% and the 60\%
of the benchmarks in the corresponding category. Conversely, the SMT-based
approach solves all benchmarks in these two categories. If we take a look
to the survival plot only for the Quicksort category in \cref{fig:survival:plots:quicksort}
(which reports the absolute number of solved benchmarks), we observe that
the POMC tool gets stuck solving (approximately) 330 benchmarks, while the
SMT-based approach solves circa
430 benchmarks.

In our benchmarks, we found only one case in which the solving time of POMC
is always better than the one of the SMT-based approach. It corresponds to
the green squares on the scatter plots in~\cref{fig:scatter:plots} (c) for
the Stack category, corresponding to Formula 12. The reason is that this
formula requires very few nondeterministic transitions in the
explicit-state automaton. This, in turn, makes the search of the
state-space a (almost) deterministic step, and thus very efficient for the
depth-first algorithm of POMC.  On the contrary, the breadth-first
algorithm of the SMT-based approach seems to perform worse.



\section{Conclusions}
\label{sec:conclusions}

We have introduced a tree-shaped tableau for the future fragment of the temporal logic \ac{POTL} on finite-word semantics,
and encoded it in SMT to perform symbolic model checking of procedural programs.
This is the first time both of these techniques have been used for checking a temporal logic with context-free modalities.
The experimental evaluation shows that our symbolic approach scales better than the state-of-the-art explicit-state one.

Extending the tableau to past \ac{POTL} operators and to infinite words
seems a promising future direction, which should be achievable through an
approach similar to related work on the tree-shaped tableau for
LTL~\cite{GeattiGMR21}.


\begin{credits}
\begin{wrapfigure}{l}{2cm}
\vspace{-7ex}
\includegraphics[width=2cm]{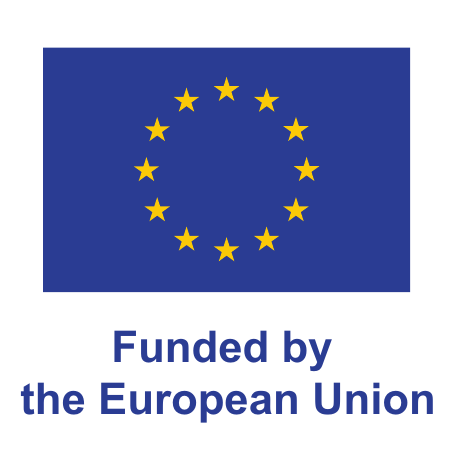}
\vspace{-2ex}
\end{wrapfigure}
\subsubsection{\ackname}
This work was partially funded by the Vienna Science and Technology Fund (WWTF)
grant [10.47379/ICT19018] (ProbInG),
and by the EU Commission in the Horizon Europe research and innovation programme
under grant agreement No.\ 101107303 (MSCA Postdoctoral Fellowship CORPORA).

\subsubsection{\discintname}
The authors have no competing interests to declare that are
relevant to the content of this article.
\end{credits}

\bibliographystyle{splncs04}
\bibliography{biblio}

\clearpage

\appendix
\section{Appendix: Soundness of the Tableau}
\label{app:soundness}

The correctness proof is based on the fact that an accepting branch of the tableau
identifies an OP word which is a model for formula $\phi$.

In the following, we let $[n:m] = \set{n, n+1, \dots, m}$.
Let $\bar{u} = u_0 u_1 \dots u_m$ be an accepting branch,
and $\bar{\pi} = \pi_0 \pi_1 \dots \pi_n$
the sequence of all step nodes in $\bar{u}$.
For any node $u$ we define its propositional label as $\Gamma_{AP}(u) = \Gamma(u) \cap (\Sigma \cup AP)$.
We define an OP word $w_{\bar{u}} = \# \Gamma_{AP}(\pi_{i_0}) \Gamma_{AP}(\pi_{i_1}) \dots \allowbreak \Gamma_{AP}(\pi_{i_k})$
such that $i_0 = 0$, $i_k = n$ and for all $p \in [1:k-1]$,
$i_p$ is the index of the first push or shift node descending from $\pi_{i_{p-1}}$.
We define a function $\iota$ that relates positions in $w$ to step nodes,
such that $\iota(0) = \bot$ and $\iota(p) = \pi_{i_p}$ for all $p \in [1:k]$.

The OP alphabet $(\Sigma, M)$ of the tableau defines the chain relation for word $w_{\bar{u}}$.
To see how chains in $w_{\bar{u}}$ are reflected in $\bar{\pi}$,
we define the concept of \emph{support} for tableau branches.

We call a \emph{support} for the simple chain
$\ochain {c_0} {c_1 c_2 \dots c_\ell} {c_{\ell+1}}$
a sequence of step nodes of the form
$\pi_1 \pi_2 \dots \pi_{\ell-1} \pi_{\ell} \pi_{\ell+1}$
such that
\begin{itemize}
\item for all $p \in [1:\ell+1]$, we have $\Gamma_{AP}(\pi_p) = c_p$;
\item $\pi_1$ is a push node, all nodes from $\pi_2$ to $\pi_{\ell}$ are shift nodes,
and $\pi_{\ell+1}$ is a pop node;
\item $\stack(\pi_p) = \pi_1$ for all $p \in [2:\ell+1]$.
\end{itemize}

We call a \emph{support for the composed chain}
$\ochain {c_0} {s_0 c_1 s_1 c_2 \dots c_\ell s_\ell} {c_{\ell+1}}$
a sequence of step nodes of the form
$\bar{\pi}_0 \pi_1 \bar{\pi}_1 \pi_2 \dots \pi_\ell \bar{\pi}_\ell \pi_{\ell+1}$
such that
\begin{itemize}
\item for all $p \in [1:\ell+1]$, we have $\Gamma_{AP}(\pi_{j_p}) = c_p$;
\item $\pi_1$ is a push node, all nodes from $\pi_2$ and $\pi_\ell$ are shift nodes,
  and $\pi_{\ell+1}$ is a pop node;
\item $\stack(\pi_p) = \pi_1$ for all $p \in [2:\ell+1]$;
\item for every $p = [0:\ell]$
  if $s_p \neq \epsilon$, then $\bar{\pi}_p$
  is a support for the chain $\ochain {c_p} {s_p} {c_{p+1}}$,
  else $\bar{\pi}_p$ is the empty sequence.
\end{itemize}

We define the \emph{depth} of a chain as follows:
$\depth(\ochain{a}{x}{b}) = 0$ if $\ochain{a}{x}{b}$ is a simple chain, and
$\depth(\ochain {c_0} {s_0 c_1 \dots c_\ell s_\ell} {c_{\ell+1}}) = 1 + \max_{i = 0, 1, \dots, \ell} \depth(\ochain{c_i}{s_i}{c_{i+1}})$.

\begin{lemma}
  \label{lemma:soundness-help}
Let $\ochain {c_0} {s_0 c_1 s_1 c_2 \dots c_\ell s_\ell} {c_{\ell+1}}$ be a chain,
$\bar{\pi}_0 \pi_1 \bar{\pi}_1 \pi_2 \dots \pi_\ell \bar{\pi}_\ell \pi_{\ell+1}$
its support in $\bar{u}$,
and $\pi_0$ the node in $\bar{\pi}$ corresponding to the word position of $c_0$
(i.e., a push or shift node such that $\Gamma_{AP}(\pi_0) = c_0$).

Then, (a) for all $p \in [2:\ell+1]$ we have $\ctx(\pi_p) = \pi_0$,
and (b) $\ctx(\stack(\pi_{\ell+1})) = \ctx(\pi'_0)$,
where $\pi'_0$ is the successor of $\pi_0$ in $\bar{\pi}$.
\end{lemma}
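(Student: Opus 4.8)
The plan is to prove (a) and (b) simultaneously by induction on $\depth(\ochain{c_0}{s_0 c_1 \dots c_\ell s_\ell}{c_{\ell+1}})$; the simple-chain case is then just the inductive step read with every $\bar\pi_i$ empty, with no inductive hypothesis invoked. Before anything else I would isolate the one structural fact everything rests on: a support occupies a \emph{contiguous} block of step nodes of $\bar\pi$, sitting immediately after $\pi_0$. Writing $\mathrm{succ}(\cdot)$ for the successor of a step node in $\bar\pi$, this means $\pi'_0=\mathrm{succ}(\pi_0)$ is the first node of $\bar\pi_0$ (or $\pi_1$ when $s_0=\epsilon$), and for $p\in[2:\ell+1]$ that $\pi_p=\mathrm{succ}(\pi_{p-1})$ when $s_{p-1}=\epsilon$ and $\pi_p=\mathrm{succ}(q_{p-1})$ otherwise, where $q_{p-1}$ is the last node of $\bar\pi_{p-1}$ (a pop node whose $\stack$, by the support definition, is the first push node of $\bar\pi_{p-1}$). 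Combined with the fact that the $\stack$ and $\ctx$ labels of consecutive step nodes are tied together exactly by the relevant clause of \cref{table:steps}, and that a shift node always has $\stack(\cdot)\ne\bot$ (no shift move fires on the empty stack), this is all the machinery needed. One should also observe that each $\bar\pi_i$ is a support for a chain of strictly smaller depth whose left-context node is again a push or shift node ($\pi_0$ for $i=0$; $\pi_i$ for $i\ge 1$), so the inductive hypothesis is applicable to it.

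I would dispatch (b) first, as it only uses the inductive hypothesis on $\bar\pi_0$. Since $\stack(\pi_{\ell+1})=\pi_1$ by the support definition, it suffices to show $\ctx(\pi_1)=\ctx(\pi'_0)$. If $s_0=\epsilon$ then $\pi_1=\pi'_0$ and there is nothing to do. If $s_0\ne\epsilon$, then $\pi_1=\mathrm{succ}(q_0)$ with $q_0$ the pop node ending $\bar\pi_0$, so the pop clause of \cref{table:steps} gives $\ctx(\pi_1)=\ctx(\stack(q_0))$; and since $\stack(q_0)$ is the first push node of $\bar\pi_0$, the inductive hypothesis (b) applied to $\bar\pi_0$ gives $\ctx(\stack(q_0))=\ctx(\mathrm{succ}(\pi_0))=\ctx(\pi'_0)$.

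For (a) I would run a secondary induction on $p\in[2:\ell+1]$, organized around the auxiliary claim that $\ctx(\mathrm{succ}(\pi_{p-1}))=\pi_0$. When $p-1=1$, $\pi_1$ is a push node and the push clause of \cref{table:steps} expresses $\ctx(\mathrm{succ}(\pi_1))$ in terms of the $\bar\pi$-predecessor of $\pi_1$ (the $\bot$ alternative of that clause being excluded since $\stack(\pi_1)\ne\bot$). That predecessor is $\pi_0$ when $s_0=\epsilon$, in which case the clause gives $\ctx(\mathrm{succ}(\pi_1))=\pi_0$ directly; otherwise it is the pop node $q_0$ ending $\bar\pi_0$, in which case the clause gives $\ctx(\mathrm{succ}(\pi_1))=\ctx(q_0)$, and $\ctx(q_0)=\pi_0$ by the inductive hypothesis (a) for $\bar\pi_0$ ($q_0$ being its last, hence not first, node). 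When $p-1\ge 2$, $\pi_{p-1}$ is a shift node, so the shift clause gives $\ctx(\mathrm{succ}(\pi_{p-1}))=\ctx(\pi_{p-1})$, which equals $\pi_0$ by the secondary hypothesis. The claim then yields (a): if $s_{p-1}=\epsilon$ then $\pi_p=\mathrm{succ}(\pi_{p-1})$; otherwise $\pi_p=\mathrm{succ}(q_{p-1})$, and the pop clause together with the inductive hypothesis (b) for $\bar\pi_{p-1}$ give $\ctx(\pi_p)=\ctx(\stack(q_{p-1}))=\ctx(\mathrm{succ}(\pi_{p-1}))=\pi_0$.

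The hard part will be the structural fact used at the outset: proving rigorously that the step nodes realizing a chain support form a contiguous segment of $\bar\pi$ placed right after $\pi_0$, and pinpointing the $\bar\pi$-predecessor of each support node. This is precisely where one must match the abstract support with the operator-precedence parse of $w_{\bar u}$ carried out by the branch's step nodes (the slogan ``chain supports in the word correspond to sequences of step nodes''). Everything downstream of it is a finite, if laborious, case analysis over the clauses of \cref{table:steps} and over whether each $s_i$ is empty, with a little extra care for $\bot$-valued stacks and for the degenerate outermost case $c_0=\#$.
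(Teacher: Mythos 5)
Your proof is correct and follows essentially the same route as the paper's: induction on chain depth, using claim (a) of the inductive hypothesis on the support of $s_0$ together with the push-after-pop step rule to obtain $\ctx(\pi'_1)=\pi_0$, claim (b) on the supports of the $s_p$ together with the pop step rule to propagate the context across intermediate sub-chains, and the identity $\stack(\pi_{\ell+1})=\pi_1$ to derive claim (b). The differences are purely presentational (your secondary induction on $p$ and your explicit flagging of the contiguity of supports in $\bar{\pi}$, which the paper likewise leaves implicit).
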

\begin{proof}
The $\ctx$ of tableau nodes is determined by the step rules in \cref{table:steps}.

Since $\pi_1$ is a push node, the rule to apply depends on the type of its closest step ancestor $u_s$,
which depends in turn on whether $s_0 = \epsilon$.
We prove the claim by induction on
$\depth(\ochain {c_0} {s_0 c_1 s_1 c_2 \dots c_\ell s_\ell} {c_{\ell+1}})$.

The base case is when the chain is simple, so $s_i = \epsilon$ for all $i \in [1:\ell]$.
In this case, when applying the step rules to $\pi_1$,
we have $u_s = \pi_0$, which is either a push or a shift node,
and $\ctx(u') = \ctx(\pi_2) = \pi_0$.
Nodes from $\pi_2$ to $\pi_\ell$ are shift nodes:
the third row of \cref{table:steps} applies and,
by a simple induction, $\ctx(u') = \ctx(\pi_{p+1}) = \ctx(\pi_p) = \pi_0$
for all $p \in [1:\ell]$, which proves (a).
Claim (b) follows from the fact that $\stack(\pi_p) = \pi_1 = \pi'_0$ for all $p \in [2:\ell+1]$
due to the first and third rows of \cref{table:steps},
and in particular $\ctx(\stack(\pi_{\ell+1})) = \ctx(\pi'_0)$.

For the inductive step, we have the hypothesis that claims (a) and (b) are true for all sub-chains
$\ochain {c_p} {s_p} {c_{p+1}}$ with $p \in [0:\ell]$ such that $s_p \neq \epsilon$.

Let $\bar{\pi}^0_0 \pi^0_1 \bar{\pi}^0_1 \pi^0_2 \dots \pi^0_{\ell^0} \bar{\pi}^0_{\ell^0} \pi^0_{\ell^0+1}$,
be the support of chain $\ochain{c_0}{s_0}{c_1}$ in $\bar{\pi}$.
By claim (a), we have $\ctx(\pi^0_{\ell^0+1}) = \pi_0$.
Since $\pi_1$ is a push node and $u_s = \pi^0_{\ell^0+1}$ is a pop node,
the second row of \cref{table:steps} applies,
and $\ctx(\pi'_1) = \ctx(\pi^0_{\ell^0+1}) = \pi_0$,
where $\pi'_1$ is the successor of $\pi_1$ in $\bar{\pi}$.

Now consider chains $\ochain {c_p} {s_p} {c_{p+1}}$ with $p \in [1:\ell]$,
corresponding to the portion $\pi_p \bar{\pi}_p \pi_{p+1}$ of the support.
If $s_p = \epsilon$, $\ctx(\pi_{p+1}) = \pi_0$ as in the base case.
Otherwise, let
$\bar{\pi}_p = \bar{pi}^p_0 \pi^p_1 \dots \pi^p_{\ell^p} \bar{\pi}^p_{\ell^p} \pi^p_{\ell^p+1}$.
By claim (b) of the induction hypothesis we have $\ctx(\stack(\pi^p_{\ell^p+1})) = \ctx(\pi'_p)$,
where $\pi'_p$ is the successor of $\pi_p$ in $\bar{\pi}$.
Since $\pi^p_{\ell^p+1}$ is a pop node, the fourth row of \cref{table:steps} applies,
and $\ctx(\pi_{p+1}) = \ctx(\stack(\pi^p_{\ell^p+1})) = \ctx(\pi'_p)$.
Since we already proved that $\ctx(\pi'_1) = \pi_0$,
by a simple induction we have $\ctx(\pi_p) = \pi_0$ for all $p \in [1:\ell+1]$,
and claim (a) is proved.

As for claim (b), from the inductive hypothesis applied to $s_0$ follows that
the pop node $\pi^0_{\ell^0+1}$ that precedes $\pi_1$ is such that
$\ctx(\stack(\pi^0_{\ell^0+1})) = \ctx(\pi'_0)$,
hence by the fourth row of \cref{table:steps} we have $\ctx(\pi_1) = \ctx(\pi'_0)$.
By the second and third rows of \cref{table:steps}
we have $\stack(\pi_p) = \pi_1$ for all $p \in [2:\ell+1]$,
so in particular $\ctx(\stack(\pi_{\ell+1})) = \ctx(\pi_1) = \ctx(\pi'_0)$.
\qed
\end{proof}

Hence, for any two positions $i,j$ in $w_{\bar{u}}$ such that $\chain(i,j)$,
there is a support
$\bar{\pi}_0 \pi_1 \bar{\pi}_1 \pi_2 \dots \pi_\ell \bar{\pi}_\ell \pi_{\ell+1}$
such that, naming $\pi_0$ the predecessor of $\bar{\pi}_0$ in $\bar{\pi}$,
and $\pi_{\ell+2}$ the successor of $\pi_{\ell+1}$,
we have $\iota(i) = \pi_0$ and $\iota(j) = \pi_{\ell+1}$,
and for all $p \in [2:\ell+1]$ we have $\ctx(\pi_p) = \pi_0$.

Conversely, for each pop node $\pi_p$, there exists a node $\pi_q$
which is the closest push or shift successor of $\pi_p$
such that $\chain(\iota^{-1}(\ctx(\pi_p)), \iota^{-1}(\pi_q))$.

We can now prove the soundness result.
\soundnessthm*
\begin{proof}
  The proof is carried out by structural induction on the formula. For each
  $\psi \in \clos{\phi}$ and position $i \in [1:k]$ in $w_{\bar{u}}$, if $\psi
  \in \Gamma(\iota(i))$ then $(w_{\bar{u}}, i) \models \psi$. It follows
  directly that $(w_{\bar{u}}, 1) \models \phi$.
  
  For the base case of a simple proposition, the claim follows directly from the
  definition of $w_{\bar{u}}$ whose $i$-th letter is exactly
  $\Gamma_{AP}(\iota(i))$. For the inductive case we branch over the kind of
  formula:
  \begin{enumerate}
    \item the case of Boolean connectives is trivial. For example, if
      $\psi_1\lor\psi_2\in\Gamma(\iota(i))$ then by the expansion rules
      (\cref{table:expansion}) we have either $\psi_1\in\Gamma(\iota(i))$ or
      $\psi_2\in\Gamma(\iota(i))$, hence by inductive hypothesis either
      $(w_{\bar{u}}, i) \models \psi_1$ or $(w_{\bar{u}}, i) \models \psi_2$,
      therefore $(w_{\bar{u}}, i) \models \psi_1\lor\psi_2$.
    \item if $\lnextsup{t} \psi\in\Gamma(\iota(i))$, then by the step rules of  
      \cref{table:steps} (either the one for push or for shift nodes) we have
      $\psi\in\Gamma(\iota(i+1))$. Therefore by inductive hypothesis we have
      $(w_{\bar{u}}, i+1) \models \psi$. Furthermore we know
      \cref{rule:check:pnext} did not trigger because the branch is accepted, so
      we know that either $i \sim^t (i+1)$ or $i \doteq (i+1)$. Hence
      $(w_{\bar{u}}, i) \models\lnextsup{t} \psi$.
    \item if $\lcnext{t} \psi\in\Gamma(\iota(i))$. Since the word is accepted we
      know that the \emph{accepting rule} has fired, and since $\iota(i)$ is a
      push or shift node it means there is at least a pop node descendant
      $\iota(k)$ such that $\stack(\iota(k))=\iota(i)$ (or
      $\stack(\iota(k))=\stack(\iota(i))$ if $\iota(i)$ is a shift node). Since
      \cref{rule:check:xnext} did not fire on $\iota(k)$, we know that
      $\lcnext{t} \psi$ is \emph{fulfilled} in $\iota(i)$. Then,
      \cref{def:xnext-fulfill} tells us the existence of a pop node $\iota(j)$
      (possibly but not necessarily equal to $\iota(k)$) such that:
      \begin{enumerate}
        \item\label{soundness:ctx} 
          $\ctx(\iota(j))=\iota(i)$, 
        \item\label{soundness:gamma} 
          $\Gamma(\iota(i)) \lessdot \Gamma(\iota(j))$ or $\Gamma(\iota(i)) 
          \doteq \Gamma(\iota(j))$, and
        \item\label{soundness:z}
          $\alpha\in\Gamma(\iota(z))$ for some push or shift node $\iota(z)$
          descendant of $\iota(j)$. 
      \end{enumerate}
      By \cref{soundness:ctx,soundness:z}, together with
      \cref{lemma:soundness-help}, we know that $\chain(i,z)$ holds. This last
      fact and \cref{soundness:gamma} tell us that 
      $(w_{\bar{u}}, i)\models\lcnext{t} \psi$.
      \item The case of other operators follow similar arguments.
  \end{enumerate}
\end{proof}

\begin{figure}
  \centering
  \tikzset{
  ctx/.style={->, dotted, blue, >=latex},
  stack/.style={->, dashed, red, >=latex},
  baseline=(current bounding box.north)
}
\begin{forest}
  for tree={
    myleaf/.style={label=below:{\strut#1}}
  },
  [{$\{\lcdnext \lcall \land \lcunext \lexc \land \lwcdnext p\}$,$\#$,$\bot$,$\bot$},
    name=root
    [{$\{\lcdnext \lcall, \lcunext \lexc, \lwcdnext p\}$,$\#$,$\bot$,$\bot$},
      edge label={node[midway,left,font=\scriptsize]{expansion}}
      [{\underline{1,$\{\lcall, \lcdnext \lcall, \lcunext \lexc, \lwcdnext p\}$,$\#$,$\bot$,$\bot$}},
        edge label={node[midway,left,font=\scriptsize]{add struct}},
        name=push1
        [{$\emptyset$,$\lcall$,1,$\bot$},
          edge label={node[midway,left,font=\scriptsize]{push}}
          [{\underline{2,$\{\lcall\}$,$\lcall$,1,$\bot$}},
            edge label={node[midway,left,font=\scriptsize]{add-struct}},
            name=push2
            [{$\emptyset$,$\lcall$,2,1},
              edge label={node[midway,left,font=\scriptsize]{push}}
              [{\underline{3,$\{\lret\}$,$\lcall$,2,1}},
                edge label={node[midway,left,font=\scriptsize]{add struct}},
                name=shift3
                [{$\emptyset$,$\lret$,2,1},
                  edge label={node[midway,left,font=\scriptsize]{shift}}
                  [{\underline{4,$\{\lcall\}$,$\lret$,2,1}},
                    edge label={node[midway,left,font=\scriptsize]{add-struct}},
                    name=pop4
                    [{$\{\lcall\}$,$\lcall$,1,$\bot$},
                      edge label={node[midway,left,font=\scriptsize]{pop}}
                      [\underline{5,$\{\lcall, p\}$,$\lcall$,1,$\bot$},
                        edge label={node[midway,left,font=\scriptsize]{guess}},
                        name=push5
                        [{$\emptyset$,$\lcall$,5,1},
                          edge label={node[midway,left,font=\scriptsize]{push}}
                          [\underline{6,$\{\lexc\}$,$\lcall$,5,1},
                            edge label={node[midway,left,font=\scriptsize]{add struct}},
                            name=pop6
                            [{$\{\lexc\}$,$\lcall$,1,$\bot$},
                              edge label={node[midway,left,font=\scriptsize]{pop}}
                              [\underline{7,$\{\lexc\}$,$\lcall$,1,$\bot$},
                                edge label={node[midway,left,font=\scriptsize]{guess}},
                                name=pop7
                                [\underline{8,$\{\lexc\}$,$\#$,$\bot$,$\bot$},
                                  edge label={node[midway,left,font=\scriptsize]{pop}},
                                  name=push8
                                  [{$\emptyset$,$\lexc$,8,$\bot$},
                                    edge label={node[midway,left,font=\scriptsize]{push}}
                                    [\underline{9,$\{\#\}$,$\lexc$,8,$\bot$},
                                      edge label={node[midway,left,font=\scriptsize]{add struct}},
                                      name=pop9
                                      [\underline{10,$\{\#\}$,$\#$,$\bot$,$\bot$},
                                        edge label={node[midway,left,font=\scriptsize]{pop}},
                                        myleaf={\scriptsize \cmark EMPTY}
                                      ]
                                    ]
                                  ]
                                ]
                              ]
                            ]
                          ]
                        ]
                      ]
                    ]
                  ]
                ]
              ]
            ]
          ]
        ]
      ]
    ]
  ]
\draw[ctx] (pop4.east) to[right, out=60, in=260] (push1.east);
\draw[ctx] (pop6.east) to[right, out=70, in=270] (push1.east);
%
\draw[stack] (push2.west) to[left, out=170, in=260] (push1.west);
\draw[stack] (shift3.west) to[bend left, left] (push2.west);
\draw[stack] (pop4.west) to[bend left, left] (push2.west);
\draw[stack] (push5.west) to[left, out=110, in=265] (push1.west);
\draw[stack] (pop6.west) to[bend left, left] (push5.west);
\draw[stack] (pop7.west) to[left, out=105, in=265] (push1.west);
\draw[stack] (pop9.west) to[bend left, left] (push8.west);
\end{forest}

  \caption{Example of an accepting branch of the tableau for formula
    $\lcdnext \lcall \land \lcunext \lexc \land \lwcdnext p$.}
  \label{fig:tableau-example}
\end{figure}

\section{Appendix: Tableau Encoding}
\label{app:encoding}

Here we present the details of the tableau SMT encoding that are missing from
\cref{sec:tableau}. We start by noting that, to handle hierarchical operators as
well, the definition of $\xnf$ in \cref{def:encoding:xnf} has to be extended
with the following two inductive cases:
\begin{align*}
  \xnf(\lhuntil{t}{\alpha}{\beta}) & = 
    (\xnf(\beta) \land \zeta_t) \lor \big( 
      \xnf(\alpha) \land \lhnext{t}(\lhuntil{t}{\alpha}{\beta})
    \big) \\
  \xnf(\lhrelease{t}{\alpha}{\beta}) & = 
    (\zeta_t \implies \xnf(\beta)) \land \big( 
      \xnf(\alpha) \lor \lwhnext{t}(\lhrelease{t}{\alpha}{\beta})
    \big)
\end{align*}
where $\zeta_u$ and $\zeta_d$ are new auxiliary symbols from $\mathcal{S}$.

Then, we proceed with the definition of $\phi_{\mathrm{axioms}}$, $\phi_{OPM}$
and $\phi_{\mathrm{init}}$:
\begin{align*}
  \phi_{\mathrm{axioms}} \equiv  {} &
  \bigwedge_{p\in\Sigma} \bar\Sigma(p) \land 
  \bigwedge_{p\in AP}\neg\bar\Sigma(p) \\
  {}\land {} & \forall x (
      \bar\Sigma(\struct(x)) 
      \land \bar\Sigma(\smb(x))
      \land \Gamma(\struct(x),x)
  )\\
  \phi_{OPM} \equiv {} & 
  \bigwedge_{p\lessdot q} p \lessdot q \land
  \bigwedge_{p\doteq q} p\doteq q \land
  \bigwedge_{p\gtrdot q} p\gtrdot q \\
  {}\land{} & \bigwedge_{p\not\lessdot q} \neg(p \lessdot q) \land
  \bigwedge_{p\not\doteq q} \neg(p\doteq q) \land
  \bigwedge_{p\not\gtrdot q} \neg(p\gtrdot q)\\
  \phi_{\mathrm{init}} \equiv {} &
  \stack(0)=0\land\ctx(0)=0\land\struct(0)=\#\land\smb(0)=\#\\
    {}\land{} & \smb(1)=\#\land\stack(1)=0\land\ctx(1)=0 \\
    {}\land{} & \xnf(\phi)_G(1)
\end{align*}

Then, we present the missing formulas for step rules:
\begin{align*}
  \steprule_{\shift}(x)\equiv {} & 
  \bigwedge_{\lnextsup{t}\alpha\in\clos{\phi}}\big(
    \Gamma((\lnextsup{t}\alpha)_G, x) \implies \xnf(\alpha)_G(x+1)
  \big)\\
  {} \land {} & \smb(x+1)=\struct(x) \\
  {} \land {} & \stack(x+1)=\stack(x) \\
  {} \land {} & \ctx(x+1)=\ctx(x)\\
  \steprule_{\pop}(x)\equiv{} & 
  \forall p (\Gamma(p,x) \iff \Gamma(p,x+1))\\
  {} \land {} & \smb(x+1)=\smb(\stack(x)) \\
  {} \land {} & \stack(x+1)=\stack(\stack(x)) \\
  {} \land {} & \ctx(x+1)=\ctx(\stack(x))
\end{align*}

Finally, we present the termination rules. As mentioned in \cref{sec:encoding},
there is a formula $r_i(x)$ encoding each $i$-th block of rows in
\cref{table:rejecting}. As mentioned, \cref{rule:contradiction} does not need to
be encoded, and \cref{rule:conflict,rule:end} have already been presented. So we
present here the encoding formulas from $r_{\ref{rule:check:pnext}}(x)$ to
$r_{\ref{rule:prune}}(x)$.
\begin{align*}
  r_{\ref{rule:check:pnext}}(x)\equiv {} &
  (\push(x-1)\lor\shift(x-1))\implies{}\\
  &
  \left[
    \begin{aligned}
      &\smashoperator{\bigwedge_{\lnextsup{d}\alpha\in\clos{\phi}}} \neg \bigl(
        \Gamma((\lnextsup{d}\alpha)_G, x-1) \land \struct(x-1)\gtrdot\struct(x) 
      \bigr) \\
      {}\land{}&
      \smashoperator{\bigwedge_{\lnextsup{u}\alpha\in\clos{\phi}}} \neg \bigl(
        \Gamma((\lnextsup{u}\alpha)_G, x-1) \land \struct(x-1)\lessdot\struct(x) 
      \bigr)
    \end{aligned}
  \right]\\
  r_{\ref{rule:check:wpnext}}(x)\equiv {} & 
  (\push(x-1)\lor\shift(x-1))\implies{}\\
  &
  \left[
    \begin{aligned}
      &\smashoperator{\bigwedge_{\lwdnext \alpha \in \clos{\phi}}} \bigl(
        \Gamma((\lwdnext \alpha)_G, x-1) \land \neg(\struct(x-1)\gtrdot\struct(x))
      \bigr)
      \implies \xnf(\alpha)_G(x)
      \\
      {}\land{}&
      \smashoperator{\bigwedge_{\lwunext \alpha\in\clos{\phi}}} \bigl(
        \Gamma((\lwunext \alpha)_G, x-1) \land \neg(\struct(x-1)\lessdot\struct(x))
      \bigr)
      \implies \xnf(\alpha)_G(x)
    \end{aligned}
  \right]
  \shortintertext{then:}
  r_{\ref{rule:check:xnext}}(x) \equiv {}&
  \forall y \left(
    \begin{aligned}
    &y < x \land (y = \stack(x) \lor \stack(y) = \stack(x)) \implies \\
    &\smashoperator[r]{\bigwedge_{\lcnext{d/u} \alpha\in\clos{\phi}}}
    \bigl(
    \Gamma((\lcnext{d/u} \alpha)_G, y)
      \implies \mathrm{satisfied}((\lcnext{d/u} \alpha)_G, x, y)
    \bigr)
    \end{aligned}
  \right)\\
  r_{\ref{rule:check:wxnext}}\equiv{} & 
  \smashoperator[l]{\bigwedge_{\lwcnext{d/u} \alpha\in\clos{\phi}}}
  \left(
    \Gamma((\lwcnext{d/u} \alpha)_G, \ctx(x)) \land \dirpr_{d/u}(\ctx(x), x)
    \implies \xnf(\alpha)_G(x)
  \right)
  \shortintertext{where:}
  &\hspace*{-3em}\begin{aligned}
    \mathrm{satisfied}((\lcnext{d/u} \alpha)_G, x, y) \equiv {}&
    \exists z \left(
      \begin{aligned}
      &y \le z \le x \land \pop(z) \land \ctx(z) = y \land {} \\
      &\dirpr_{d/u}(y, z) \land \xnf(\alpha)_G(z)
      \end{aligned}
    \right)\\
   \dirpr_d(x, y) \equiv {} &
     \struct(x) \lessdot \struct(y) \lor \struct(x) \doteq \struct(y) \\
   \dirpr_u(x, y) \equiv {} &
     \struct(x) \gtrdot \struct(y) \lor \struct(x) \doteq \struct(y)
  \end{aligned}
\end{align*}
Now, for a family of formulas $\mathcal{F}$, we define the following auxiliary
formulas:
\begin{align*}
  \any_{\mathcal{F}}(x)\equiv{}& \bigvee_{\psi\in\mathcal{F}} \Gamma((\psi)_G, x)\\
  \hucond_{\mathcal{F}}(x) \equiv {} &
    \any_{\mathcal{F}}(x) \implies
      \left(\pop(x) \lor \left(\push(x) \land \pop(x-1)\right)\right) \\
  \hdcond_{\mathcal{F}}(x) \equiv {} &
    (\neg\pop(x) \land \any_{\mathcal{F}}(x)) \implies \push(x+1)\\
    {}\land{}&
    (\pop(x) \land \any_{\mathcal{F}}(\ctx(x))) \implies \neg\shift(x+1)
\end{align*}
with these in place, we can continue:
\begin{align*}
  r_{\ref{rule:check:hnextu}}(x) \equiv {}&
   \hucond_{\set{\lhunext \alpha\in\clos{\phi}}}(x-1) \\
    {}\land{}&\pop(x-1) \implies
      \smashoperator{\bigwedge_{\lhunext \alpha\in\clos{\phi}}}
      \left(
        \Gamma((\lhunext \alpha)_G, \stack(x-1))
          \implies \xnf(\alpha)_G(x-1) \land \push(x)
      \right) \\
  r_{\ref{rule:check:whnextu}}(x) \equiv {} &
    (\pop(x-1) \land \push(x) \land \pop(\stack(x-1)-1) \land \push(\stack(x-1))) \\
    {} \implies {} &
    \smashoperator{\bigwedge_{{\lwhunext \alpha\in\clos{\phi}}}}
      \Gamma((\lwhunext \alpha)_G, \stack(x-1)) \implies \xnf(\alpha)_G(x-1) \\
  r_{\ref{rule:check:hnextd}}(x) \equiv {}&
    \hdcond_{\set{\lhdnext \alpha\in\clos{\phi}}}(x)\\
    {}\land{}&\left(
    \begin{aligned}
      &\pop(x-1) \land \pop(x) \implies {} \\
    &\smashoperator[r]{\bigwedge_{\lhdnext \alpha\in\clos{\phi}}}
      \left(
      \Gamma((\lhdnext \alpha)_G, \ctx(x-1))
         \implies \xnf(\alpha)_G(\ctx(x-2)) \land \pop(x-2)
      \right) 
    \end{aligned}\right)\\
  r_{\ref{rule:check:whnextd}}(x) \equiv {} &
    (\pop(x-2) \land \pop(x-1) \land \pop(x)) \\
    {}\implies{} &
    \smashoperator{\bigwedge_{\lhdnext \alpha\in\clos{\phi}}}
      \Gamma((\lhdnext \alpha)_G, \ctx(x-1)) \implies
        \xnf(\alpha)_G(\ctx(x-2))\\
  r_{\ref{rule:check:huntild}}(x) \equiv {} &
  \hdcond_{\set{\zeta_d}}(x-1) \land
    \big(\pop(x-1) \land \pop(x) \implies \Gamma(\zeta_d, x-1)\big)
\shortintertext{and finally:}
  r_{\ref{rule:prune}} \equiv {} &
    \exists y \left(
      \begin{aligned}
        &\pending(x, y) \land {} \\
        &\forall p (\Gamma(p,x)\iff\Gamma(p,y)) \land {}\\
        & \smb(x)=\smb(y) \land {}\\
        &\forall p (\Gamma(p,\stack(x))\iff\Gamma(p,\stack(y))) \land {}\\
        &\forall p (\Gamma(p,\ctx(x))\iff\Gamma(p,\ctx(y)))
      \end{aligned}
    \right)
\shortintertext{where:}
  \pending(k, x) \equiv {} &
  (\push(x) \land \neg\exists y (y < k \land \pop(y) \land \stack(y)=x)) \\
  {}\lor{} &
  (\shift(x) \land \neg\exists y (y < k \land \pop(y) \land \stack(y)=\stack(x)))
\end{align*}

\section{Appendix: Program Encoding}
\label{app:program-encoding}

We model procedural programs in MiniProc, the programming language used in \cite{ChiariMPP23},
and we use the same approach to translate them to \emph{extended OPA}.
In \cite{ChiariMPP23} extended OPA are then translated to normal OPA following an established approach
for modeling procedural programs by means of pushdown automata~\cite{AlurBE18}.
We instead encode them directly in SMT.

An \emph{extended OPA} is a tuple
$\mathcal{M} = (\Sigma, \allowbreak M, AP, \allowbreak L, \allowbreak I, \mathcal{F}, V_\mathit{glob}, (V_f)_{f \in \mathcal{F}}, \allowbreak \delta)$
where $(\Sigma, \allowbreak M)$ is an OP alphabet,
$L$ is a finite set of locations,
$I \subseteq L$ is a set of initial locations,
$\mathcal{F}$ is a finite set of function names,
$V_\mathit{glob}$ is a finite set of global program variable names,
$(V_f)_{f \in \mathcal{F}}$ is a collection of finite sets of local variable names (which we assume disjoint among them and with $V_\mathit{glob}$);
$\delta$ is a triple of transition relations
\begin{gather*}
\delta_\mathit{push}, \delta_\mathit{shift} \subseteq L \times (\powset{AP \cup \Sigma} \times \mathrm{Guards} \times \mathrm{Actions}) \times L, \\
\delta_\mathit{pop} \subseteq L \times (L \times \mathrm{Guards} \times \mathrm{Actions}) \times L.
\end{gather*}
The set of all variables is $\mathcal{V} \equiv V_\mathit{glob} \cup \bigcup_{f \in \mathcal{F}} V_f$.

$\mathrm{Guards}$ are Boolean combinations of comparisons between integer expressions involving variables, integer constants and arithmetic operations.
Variable types can be fixed-size signed or unsigned integers (Boolean variables are encoded as 1-bit integers),
and fixed-size arrays thereof.
Integer variables are represented with the theory of fixed-size bitvectors,
and arrays with the theory of arrays.
For simplicity, we show the encoding only for scalar variables,
while the encoding of arrays can be defined analogously.

$\mathrm{Actions}$ can be of the following types:
\begin{itemize}
\item $(\mathrm{Assign}, v, E)$, where $v \in \mathcal{V}$ is the variable to be assigned and $E$ is a MiniProc expression;
\item $(\mathrm{Nondet}, v)$, where $v \in \mathcal{V}$ is the variable to be assigned a nondeterministic (symbolic) value;
\item $(\mathrm{Call}, f, (v_1, \dots, v_n), (E_1, \dots, E_n), (r_1, r_m), (t_1, t_m))$,
    where $f \in \mathcal{F}$, for $i \in \{1, \dots, n\}$ all $v_i \in V_f$ are formal parameters passed by value and $E_i$ are the actual parameters,
    and for $j \in \{1, \dots, m\}$ all $r_j \in V_f$ are the formal parameters passed by value-result,
    while the respective actual parameters are $t_j \in \mathcal{V}$;
\item $(\mathrm{Return}, f, (r_1, r_m), (t_1, t_m))$ where $f \in \mathcal{F}$ and $r_i$ and $t_i$ are as above;
\item $\mathrm{Noop}$.
\end{itemize}

The SMT encoding is as follows.
We define a sort $\mathcal{L}$ for locations in $L$ and an uninterpreted function
$\mathrm{pc} : \mathbb{N} \to \mathcal{L}$ that associates each step to a location of the extended OPA.
Variable values are encoded through a valuation function
$\val : V \times \mathbb{N} \to \mathbb{BV}$,
where $\mathbb{BV}$ is the set of fixed-size bit vectors.

Given an expression $E$,
by $E|_i$ we denote the expression where all variables $v \in V$
have been replaced with $\val(v, i)$.

We define the following predicates:
\[
\steprule^\delta_{\push}(x) \equiv
  \smashoperator{\bigvee_{\delta_\mathit{push}(l_1, (b, g, a), l_2)}}
  \big(
    \mathrm{pc}(x, l_1)
    \land \action(x, \cdot, a)
    \land \mathrm{label}(b, x)
    \land g|_x
    \land \mathrm{pc}(x+1, l_2)
  \big)
\]
Where
\(
\mathrm{label}(b, x) \equiv
  \bigwedge_{\mathrm{p} \in b} \Gamma(\mathrm{p}, x)
  \land
  \bigwedge_{\mathrm{p} \in b \setminus \mathcal{V}} \neg \Gamma(\mathrm{p}, x)
\),
and
\[
\steprule^\delta_{\pop}(x) \equiv
  \smashoperator[l]{\bigvee_{\delta_\mathit{pop}(l_1, (l_s, g, a), l_2)}}
  \begin{aligned}[t]
    &\big(\mathrm{pc}(x, l_1)
      \land \mathrm{pc}(\stack(x), l_s)
      \land \action(x, \stack(x), a) \\
      &\land g|_x
      \land \mathrm{pc}(x+1, l_2)
    \big)
    \end{aligned}
\]
and $\steprule^\delta_{\shift}(x)$, which is the same as $\steprule^\delta_{\push}(x)$,
but replacing $\delta_\mathit{push}$ with $\delta_\mathit{shift}$.

To propagate values of variables in a set $V$ from $x$ to the next time step,
we define an auxiliary predicate
\[
\prop(V, x) \equiv
\bigwedge_{v \in V} \val(v, x+1) = \val(v, x)
\]
The $\action$ predicates are defined as follows:
\begin{align*}
&\action(x, \cdot, (\mathrm{Assign}, v, E)) \equiv
    \val(v, x+1) = E|_x
    \land \prop(\mathcal{V} \setminus \{v\}, x) \\
&\action(x, \cdot, (\mathrm{Nondet}, v)) \equiv
  \prop(\mathcal{V} \setminus \{v\}, x) \\
&\action(x, \cdot, (\mathrm{Call}, f, (v_1, \dots, v_n), (E_1, \dots, E_n), (r_1, r_m), (t_1, t_m))) \equiv \\
  &\qquad
  \begin{aligned}
    &\bigwedge_{i = 1}^n \val(v_i, x+1) = E_i|_x \\
    &\land \bigwedge_{j = 1}^m \val(r_j, x+1) = \val(t_j, x) \\
    &\land \bigwedge_{s \in R} \val(s, x+1) = 0 \text{ where $R = V_f \setminus (\{v_1, \dots, v_n\} \cup \{r_1, \dots, r_m\})$} \\
    &\land \prop(\mathcal{V} \setminus V_f, x)
  \end{aligned} \\
&\action(x, y, (\mathrm{Return}, f, (r_1, r_m), (t_1, t_m))) \equiv \\
  &\qquad
  \begin{aligned}
    &\bigwedge_{j = 1}^m \val(t_j, x+1) = \val(r_j, x) \\
    &\land \bigwedge_{s \in R} \val(s, x+1) = \val(s, y) \text{ where $R = V_f \setminus \{t_1, \dots, t_m\}$} \\
    &\land \prop(\mathcal{V} \setminus (V_f \cup \{t_1, \dots, t_m\}), x)
  \end{aligned} \\
&\action(x, \cdot, \mathrm{Noop}) \equiv
  \prop(\mathcal{V}, x)
\end{align*}

We also define a predicate that transforms variable names into atomic propositions
depending on their value in a given time step:
\[
\varap(x) \equiv
\smashoperator{\bigwedge_{v \in \mathcal{V} \cap AP}} \Gamma(v, x) \iff \val(v, x) \neq 0
\]

Then, we can define $\unr{\mathcal{M}}_k$, the length-$k$ unravelling of $\mathcal{M}$, as follows:
\begin{align*}
  &\bigvee_{l \in I} \pc(0, l) \land
  \bigwedge_{v \in \mathcal{V}} \val(v, 0) = 0 \land {}\\
  & \quad\forall x (1 \leq x < k \implies (
  \begin{aligned}[t]
    & \varap(x) \land {}\\
    & (\push(x) \implies \steprule^\delta_{\push}(x)) \land {}\\
    & (\shift(x) \implies \steprule^\delta_{\shift}(x)) \land {}\\
    & (\pop(x) \implies \steprule^\delta_{\pop}(x))))
  \end{aligned}
\end{align*}


\end{document}